%
\documentclass[runningheads]{llncs}
\usepackage{graphicx}
\usepackage{xcolor}
\usepackage{amsmath}
\usepackage{amsfonts}
\usepackage{mismath}
\usepackage{hyperref}
\usepackage[framemethod=TikZ]{mdframed}
\usepackage{appendix}
\usepackage{booktabs}
\usepackage{dashrule}
\usepackage{multirow}
\usepackage{pifont}
\usepackage{ifthen}
\usepackage{xfp}
\usepackage{float}
\usepackage{enumitem}
\usepackage{orcidlink}
%

\newcommand{\lsb}[1]{%
   \ifthenelse{ \equal{#1}{} }
      {\ensuremath{\mathrm{LSB}}}
      {\ensuremath{\mathrm{LSB}(#1)}}
}
\newcommand{\bsl}[1]{%
   \ifthenelse{ \equal{#1}{} }
      {\ensuremath{\mathrm{L}}}
      {\ensuremath{\mathrm{L}(#1)}}
}
\newcommand{\garblescheme}{\ensuremath{\mathrm{GC}}}
\newcommand{\garble}{\ensuremath{\mathrm{GC.Gb}}}
\newcommand{\encode}{\ensuremath{\mathrm{GC.Enc}}}
\newcommand{\eval}{\ensuremath{\mathrm{GC.Eval}}}
\newcommand{\decode}{\ensuremath{\mathrm{GC.Dec}}}

\newcommand{\enc}{\ensuremath{\mathnormal{Enc}}}
\newcommand{\dec}{\ensuremath{\mathnormal{Dec}}}
\newcommand{\skippair}{\ensuremath{\mathrm{Pair}}}
\newcommand{\genpss}{\ensuremath{\mathrm{PSS.Gen}}}
\newcommand{\evalpss}{\ensuremath{\mathrm{PSS.Eval}}}
\newcommand{\gencss}{\ensuremath{\mathrm{CSS.Gen}}}
\newcommand{\evalcss}{\ensuremath{\mathrm{CSS.Eval}}}
\newcommand{\hash}{\ensuremath{\mathnormal{H}}}
\newcommand{\view}[1]{%
   \ifthenelse{ \equal{#1}{} }
      {\ensuremath{\mathrm{View}}}
      {\ensuremath{\mathrm{View}(#1)}}
}
\newcommand{\skippingscheme}{\ensuremath{\mathrm{SS}}}
\newcommand{\genss}{\ensuremath{\mathrm{SS.Gen}}}
\newcommand{\evalss}{\ensuremath{\mathrm{SS.Eval}}}

\newcommand{\bigtheta}{\ensuremath{\mathup{\Theta}}}
\newcommand{\game}{\ensuremath{\mathcal{G}}}
\newcommand{\topo}{\ensuremath{\Phi_{\text{topo}}}}
\newcommand{\ad}{\ensuremath{\mathcal{A}}}
\newcommand{\adv}{\ensuremath{Adv}}
\newcommand{\Sim}{\ensuremath{\mathrm{Sim}}}

\newcommand{\oracle}{\ensuremath{\mathcal{O}}}
\newcommand{\tidx}[1]{\fpeval{1 + #1}}

\newenvironment{schemealg}[1][]{%
    \ifstrempty{#1}{}%
    {
        \mdfsetup{%
            frametitle={%
                \tikz[baseline=(current bounding box.east),outer sep=0pt]
                \node[anchor=east,rectangle,draw=black,fill=white,line width=1pt,rounded corners=5pt]
                {\strut #1};}}%
    }
    \mdfsetup{innertopmargin=0pt,innerbottommargin=5pt,linecolor=black,%
        linewidth=1pt,topline=true,
        roundcorner=5pt,
        frametitleaboveskip=\dimexpr-\ht\strutbox\relax,}%
    \begin{mdframed}[]\relax%
}{\end{mdframed}}%

\newenvironment{normalalg}[1][]{%
    \ifstrempty{#1}{}
    {
        \mdfsetup{
            frametitle={\strut #1},
            frametitlerule=true,
        }
    }
    \mdfsetup{
        linewidth=1pt,
        innertopmargin=5pt,
        innerbottommargin=5pt,
        roundcorner=5pt,
    }
    \begin{mdframed}[]\relax
}{\end{mdframed}}

\newboolean{anony}
\setboolean{anony}{false}    

\makeatletter
\def\blfootnote{\xdef\@thefnmark{}\@footnotetext}
\makeatother

\begin{document}
\title{Skipping Scheme for Gate-hiding Garbled Circuits}
%
%
\ifthenelse{\boolean{anony}}{
    \author{Anonymous}
    \authorrunning{Anonymous}
}{
    \author{Ke Lin\inst{1}\orcidlink{0009-0002-5376-7881}
    }
    \authorrunning{K. Lin et al.}
    \institute{
        \email{leonard.keilin@gmail.com}\\
    }
}
\maketitle              
\begin{abstract}
%
%
In classic settings of garbled circuits, each gate type is leaked to improve both space and speed optimization. Zahur et al. have shown in EUROCRYPT 2015 that a typical linear garbling scheme requires at least two $\lambda$-bit elements per gate with a security parameter of $\lambda$, which limits their efficiency. In contrast to typical garbled circuits, gate-hiding garbled circuits have the potential to drastically reduce time costs, although they have been underappreciated.

We propose the first skipping scheme for gate-hiding garbled circuits to enhance the efficiency of evaluation by observing prime implicants. Our scheme introduces skip gates to eliminate the need to calculate the entire circuit, enabling unnecessary execution paths to be avoided. We also introduce two variants of our scheme that balance security with parallelism. A proof of hybrid security that combines simulation-based and symmetry-based security in semi-honest scenarios is presented to demonstrate its security under gate-hiding conditions.
Our scheme will inspire new directions to improve the general garbling scheme and lead to more practical ones.

\keywords{Garbled Circuit \and Gate-hiding \and Semi-private Funciton Evaluation.}

\end{abstract}

%
%
\section{Introduction}

Garbled circuit (GC) was first introduced by Andrew Yao \cite{yao1982protocols} that enables secure two-party computation. 
A circuit $C$ and an input $x$ are often used to derive a garbled circuit $\tilde{C}$ and a garbled input $X$, according to the typical pipeline of garbling schemes. 
The correct result $C(x)$ may be obtained by evaluating $\tilde{C}$ on $X$ and decrypting from $\tilde{C}(X)$ without disclosing anything more. 
In particular, a simulator can simulate $\tilde{C}$ and $X$ given only $C(x)$.

In a typical two-party computation scenario, the circuit $C$ that each party wishes to assess is chosen by consensus \cite{pinkas2009secure}. The garbled circuits in these protocols only need to conceal the inputs to $C$ because $C$ is a public variable and therefore does not need to be kept secret in any other way. 
It should be noted, however, that in other situations, such as private function evaluation (PFE), it can be advantageous for a garbled circuit to mask any identifying information about the circuit itself \cite{mohassel2013hide}.

Unlike classical garbled circuits, gate-hiding garbled circuits allow only the topology to be leaked. At the same time, the gate functions remain a secret. Although most optimizations on gate-aware circuits are based on the leakage of gate types, we demonstrate in this paper that keeping the gate function unknown has the potential to reduce the time costs of evaluating subcircuits quasi-exponentially, resulting in a practical garbling scheme.

\subsection{Related Works}
The garbled circuit technique described by Yao was promoted to a cryptographic primitive, the \emph{garbling scheme}, by Bellare et al. \cite{bellare2012foundations}, which is used along with our scheme in this work. 
Since this technique has been introduced, numerous efforts \cite{saleem2018recent} have been made to improve it. Techniques such as Point and Permute \cite{beaver1990round}, GRR3 \cite{naor1999privacy}, FreeXOR \cite{kolesnikov2008improvedfreexor} and Half Gates \cite{zahur2015twohalf} improve the efficiency of execution by significantly reducing the space requirements. Since Fairplay \cite{malkhi2004fairplay}, elegant implementations of garbled circuits have begun to fill up the gap between theory and practice, including FastGC \cite{huang2011fastgc}, Obliv-C \cite{zahur2015oblivc}, and TinyGarble \cite{songhori2015tinygarble}. Despite the advances, Zahur et al. \cite{zahur2015twohalf} argue that all linear, and thus, efficient, garbling schemes must contain at least two $\lambda$-bit elements to function properly. Even though some works attempt to reduce the lower bound of these gates to $\bigo(1.5\lambda)$ \cite{kempka2016circumvent,rosulek2021threehalf}, it appears that the efficiency of classic garbled circuits is reaching its limit.

In contrast to gate-aware garbled circuits, gate-hiding circuits remain underdeveloped and underappreciated, despite the fact that Yao's original protocol achieves gate-obliviousness. 
There are several garbling schemes designed to support both AND and XOR gates while masking their type from the evaluator \cite{kempka2016circumvent,rosulek2017improvements,wang2017reducing}. The varieties differ in the types of Boolean gates they can support.
Recently, Rosulek et al. \cite{rosulek2021threehalf} achieved the state-of-the-art gate-hiding garbling, in which \emph{any} gate is garbled for $1.5\lambda+10$ bits. Furthermore, their gate-hiding construction is fully compatible with FreeXOR.

Songhori et al. \cite{songhori2019arm2gc} proposed a scheme in 2019 that is very similar to ours to improve the efficiency of evaluating garbled circuits using SkipGate in the ARM2GC framework. However, their method only allows efficient secure evaluation of functions with publicly known inputs. 
In addition, Kolesnikov \cite{kolesnikov2018free} proposes using FreeIF to omit inactive branches from garbled circuits for free, which is an analogous idea to ours but with online communication.

\subsection{Our Contributions}

We propose an effective skipping scheme at runtime for gate-hiding garbled circuits. The scheme skips inaccessible execution pathways and promotes parallelism on the fly. 
In addition to significantly improving the evaluation speed, our approach is simple to understand and use. 
The motivation for this approach stems from the observation that prime implicants in Boolean algebra disregard insignificant inputs when evaluating functions.
The method can also be used when the circuit consists of both plain and hidden gates, in which case the method is applied only to hidden gates.
We prove that our skipping scheme achieves hybrid privacy.

A plain skipping scheme (PSS) and a chained skipping scheme (CSS) are available in our construction. 
In both schemes, skip gates are constructed and can be activated or \emph{triggered} based on certain requirements.
%
We demonstrate how unnecessary execution paths can be avoided by utilizing the skipping scheme in semi-private garbled circuits. 
Even though there are more gates to generate, the reducing in executions saves evaluation time and computing resources. 
In addition to validating the skipping scheme's correctness, a proof of hybrid security is presented to prove its security under gate-hiding conditions. 

The efficiency of (sub)circuit evaluation is generally improved by our scheme, which eliminates the need to calculate the entire circuit at once. We also obtain a significant improvement for subcircuits since we only need to use a fraction of the inputs. It optimizes the calculation of gates that remain unused and benefit from parallel computation.

\section{Preliminaries}

\subsection{Notation}

The notations are defined as follows: $x\xleftarrow{U} X$ represents that $x$ was sampled randomly from $X$ under a uniform distribution, and $x\gets\mathrm{Alg}$ indicates that $x$ is generated using the algorithm $\mathrm{Alg}$. The symbol $\oplus$ signifies the exclusive OR operation on bitstrings, and $[n]$ is an alias for $\{1,\dots,n\}$. $\lsb{S}$ denotes the least significant bit of a bitstring $S$, while the remainder is represented by $\bsl{S}$. The security parameter is $\lambda$.

\subsection{Garbled Circuits}
\label{sec:garbled_circuits}

This section presents a formal definition of a garbling scheme and the notion of simulation-based privacy, following the work of Bellare et al. \cite{bellare2012foundations}. Note that all the gates in this work accept two inputs.

We use notation from Kempka et al. \cite{kempka2016circumvent}, which describes a circuit as 
$$C:=(n,m,l,A,B,g)$$
where $n$ is the number of input wires, $m$ is the number of output wires, and $l$ is the number of gates. 
A set of wires can be considered as $W:=[n+l]$, a circuit input wire set as $W_{\text{input}}: = [n+l]\setminus [n+l-m]$, and the circuit output wires as $W_{\text{gate}}:=[n+l]\setminus [n]$. 
The functions $A: W_{\text{gate}}\to W\setminus W_{\text{output}}$ and $B: W_{\text{gate}}\to W\setminus W_{\text{output}}$ indicate the first input wire of each gate $i$ and the second input wire $B(i)$. 
Each gate $i$ is specified by the function $g(i,\cdot,\cdot)=g_i(\cdot,\cdot)$.
For simplicity, we will omit the parameter $i$ in some cases.

\begin{definition}[Garbling Scheme]
A garbling scheme for a family of circuits $\mathcal{C}=\{\mathcal{C}_n\}_{n\in\mathbb{N}}$, where $n$ is a polynomial in a security parameter $k$, employs probabilistic polynomial-time algorithms $\garblescheme=(\garble,\encode,\eval,\decode)$ defined as follows:
\begin{itemize}
    \item $(\tilde{C},e,d)\gets\garble(1^\lambda,C)$: takes as input security parameter $1^\lambda$ and circuit $C\in \mathcal{C}$ and outputs garbled circuit $\tilde{C}$, encoding information $e$ and decoding information $d$.
    \item $X\gets\encode(e,x)$: takes as input encoding information $e$ and circuit input $x\in \{0,1\}^n$, and outputs garbled input $X$.
    \item $Y\gets\eval(\tilde{C},X)$: takes as input garbled circuit $\tilde{C}$ and garbled input $X$, and outputs garbled output $Y$.
    \item $y\gets\decode(d,Y)$: takes as input decoding information $d$ and garbled output $Y$, and outputs circuit output $y$.
\end{itemize}

A garbling scheme should satisfy the \textnormal{correctness} property: For any circuit $C$ and input $x$, after sampling $(\tilde{C},e,d)$ from the garbling scheme, and the following equation $C(x)=\decode(d,\eval(\tilde{C},\encode(e,x)))$ holds with all but negligible probability.
\end{definition}

We introduce the notion of a random oracle $H$ to allow adversary access. We also indicate the information on circuit $C$ that is permitted to be leaked by the garbling scheme by the value $\Phi(C)$. Topology $\Phi_{\text{topo}}(C)=(n,m,l,A,B)$ or entire circuit information $\Phi_{\text{circ}}(C)=(n,m,l,A,B,g)$, for example, can be leaked. And the gate-hiding garbling scheme always sends $\Phi_\text{topo}$ instead of the entire circuit to the evaluator.


%
%
\section{Skipping Scheme}
\label{sec:skipping_scheme}

This section presents our basic skipping scheme for semi-honest gate-hiding circuits. The scheme is first demonstrated for circuits with 2-implicative components. Then, it is extended from the initial scenario for circuits with 2-implicative components to a more general scenario for $n$-implicative components. 

\subsection{Construction}
\label{sec:construction}

We use the following notation. Let $\alpha:=\lito(\lambda)\ll\lambda$ be the security parameter for hash functions, which can usually be a constant. The input wires $W_i$ may be assigned a value of 0 or 1, which is indicated by $W_i^0,W_i^1$. The set of possible assignments of two input wires $W_i,W_j$ is denoted by $\mathcal{V}_{i,j}:=\{(W_i^{b_i},W_j^{b_j})\mid b_i,b_j\in\{0,1\}\}$. We then define the $k$-implicative (sub)circuit and skippable wires formally.

\begin{definition}[$k$-implicative]
\label{def:k-implicative}
Let $C$ be a Boolean circuit with $n$ input wires $\mathcal{W}=\{W_i\}^n_{i=1}$. We say circuit $C$ is \textnormal{$k$-implicative} ($k<n$) if: there exist $k$ fixed assignments $\{W^{b_{d_i}}_{d_i}\}^k_{i=1}$ to a subset of input wires $\mathcal{K} = \{W_{d_i}\}^{k}_{i=1}\subset\mathcal{W}$ such that for every assigments $\{W^{x_{\overline{d_i}}}_{\overline{d_i}}\}^{n-k}_{i=1}$ to $\mathcal{W}\backslash\mathcal{K}$,
$$
C(W^{x_1}_1,\dots,W^{b_{d_1}}_{d_1},\dots,W^{b_{d_k}}_{d_k},\dots,W^{x_n}_n)\equiv \text{constant}
$$
\end{definition}
\begin{corollary}
A $n$-implicative circuit is also $m$-implicative for all $m>n$.
\end{corollary}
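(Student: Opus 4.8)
The plan is a straightforward monotonicity argument: fixing additional input wires only shrinks the set of free assignments, so a restriction that is already constant stays constant. To set up notation, let $C$ have $N$ input wires $\mathcal{W}=\{W_i\}_{i=1}^N$ and suppose $C$ is $n$-implicative, witnessed by a subset $\mathcal{K}\subset\mathcal{W}$ with $|\mathcal{K}|=n$ and fixed assignments $\{W_{d_i}^{b_{d_i}}\}_{i=1}^n$ such that, for every assignment to $\mathcal{W}\setminus\mathcal{K}$, the circuit $C$ evaluates to one fixed constant $c$. Fix any $m$ with $n<m<N$ (the range in which $m$-implicativity is even defined, since the definition requires the fixed subset to be a proper subset of the input wires).

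Next I would enlarge the witness set. Because $|\mathcal{K}|=n<m<N$, we may choose some $\mathcal{K}'$ with $\mathcal{K}\subseteq\mathcal{K}'\subseteq\mathcal{W}$ and $|\mathcal{K}'|=m$. Define a fixed assignment on $\mathcal{K}'$ by keeping $W_{d_i}^{b_{d_i}}$ on the wires of $\mathcal{K}$ and choosing arbitrary bits (say all $0$) on the wires of $\mathcal{K}'\setminus\mathcal{K}$. It then remains to check the $m$-implicativity condition for this choice. Take any assignment to $\mathcal{W}\setminus\mathcal{K}'$; combined with the bits just fixed on $\mathcal{K}'\setminus\mathcal{K}$, this determines a particular assignment to $\mathcal{W}\setminus\mathcal{K}$, so by $n$-implicativity with witness $\mathcal{K}$ the evaluation of $C$ equals $c$. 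Since this holds for every assignment to $\mathcal{W}\setminus\mathcal{K}'$ and $c$ is independent of that assignment, $\mathcal{K}'$ together with its fixed assignment witnesses that $C$ is $m$-implicative.

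I do not expect any real obstacle: the content is pure set inclusion, and the argument is essentially one line once the witness set is extended. The only point that merits a word of care is the implicit bound $m<N$ — for $m\ge N$ the notion is vacuous, so the statement should be read as "$C$ is $m$-implicative for every $m$ with $n<m<N$," and the construction above covers exactly that range. One could optionally remark that the choice of the extra fixed bits is immaterial, which makes clear the argument is robust.
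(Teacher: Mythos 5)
Your proposal is correct, and it is the natural (indeed the only reasonable) argument: extend the witness set to size $m$, fix the extra wires arbitrarily, and observe that every assignment to the smaller complement is still covered. The paper states this corollary without proof, treating it as immediate from Definition~\ref{def:k-implicative}, so there is nothing to contrast; your added care about the range $n<m<N$ (forced by the definition's requirement that the fixed subset be a proper subset of the input wires) is a reasonable clarification rather than a deviation.
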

A $k$-implicative circuit implies that $k$ specific assignments to parts of the inputs are sufficient to compute the final result. This simple observation leads to the reduction of evaluation costs by skipping the gates which are not the implicants.

\begin{definition}[Skippable]
\label{def:skippable}
For a circuit $C$, two of its input wires $W_i$ and $W_j$ are \textnormal{skippable} if there exist 2 different assignment pairs $v_0^{i,j},v_1^{i,j}\in\mathcal{V}_{i,j}$ that make the circuit $C$ 2-implicative in both cases. These two input wires are also known as a pair of skippable wires, denoted by $(W_i,W_j)\gets \skippair(C)$.
\end{definition}

Skippable wires are the basic elements of our skipping scheme. Definition \ref{def:k-implicative} discusses $n$-implicative wires as another form of implicants in Boolean algebra.
Previous research has shown that prime implicants of functions with many inputs can be found using heuristic methods, which is essentially more efficient by several orders of magnitude. The entire algorithm for searching for skippable wires can be found in \ref{appendix:skippable_wires_searching}. Next, we describe our scheme, which leverages these skippable wires. It should be noted that the skipping scheme targets the subcircuits of the entire circuit, and we will be abusing the terms subcircuit and circuit in the following sections.

\subsubsection{Plain Skipping Scheme.} Let $C$ be a Boolean circuit and $\tilde{C}\gets\garble(1^\lambda,C)$ be its garbled version. Let $\Gamma=(G,\enc,\dec)$ be a CPA-secure symmetric key encryption scheme, and $\hash_\alpha$ be a hash function with security parameter $\alpha$. 
The complete PSS is described in Fig. \ref{fig:pss}.

Trigger value $t$ is an easy-to-compute indicator that provides a fast method to determine whether the skip gates should be activated. 
The constant result of (sub)circuit is precomputed and encrypted as $G_b$ for $b\in\{0,1\}$. As there are two pairs of inputs $v_0,v_1$ that lead to the constant output, ciphertexts are permuted randomly to ensure security. Permute bit $b$ indicates which ciphertext should be used when the skip gate is triggered.
Here $s$ is a shared salt to restrict the permute bit $b_0:=\lsb{t_0},b_1:=\lsb{t_1}$ and the remainder bitstring, such that $b_0\ne b_1$ and $\bsl{t_0}\ne \bsl{t_1}$. Then we compute the ciphertexts
\begin{equation*}
G_b\gets \enc_{v_0}^\lambda(R),~G_{1-b}\gets \enc_{v_1}^\lambda(R)
\end{equation*}
where $b:=b_0$ and $R:=\tilde{C}(\dots,v_0,\dots)=\tilde{C}(\dots,v_1,\dots)$ is the result label of the garbled circuit given the input $v_0$ or $v_1$ according to the Definitions \ref{def:k-implicative} and \ref{def:skippable}. Note that some superscripts, such as $i,j$, may be ignored if the reference to the wire pairs is clear.

For simplicity, we denote the skip gates by $S_{i,j}^\text{PSS}:=(\bsl{t_0^{i,j}},\bsl{t_1^{i,j}},G_0^{i,j},G_1^{i,j})$. 

\begin{figure}[htb]
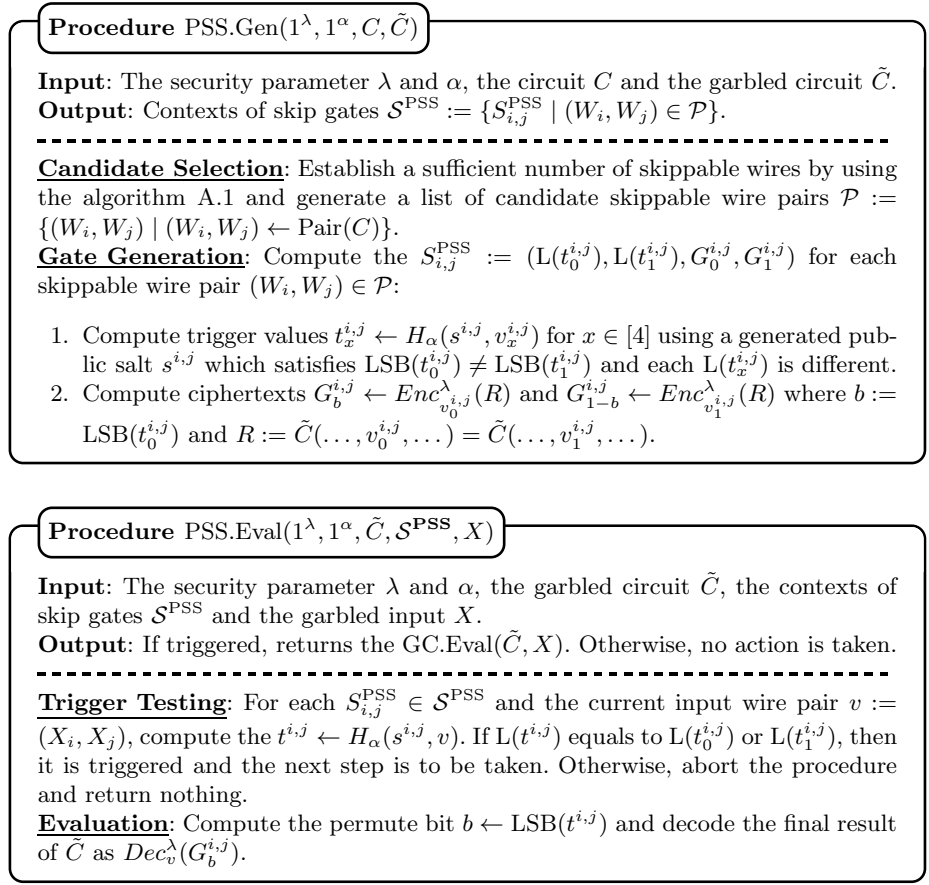

\centering
\begin{schemealg}[\textbf{Procedure} $\genpss(1^\lambda,1^\alpha,C,\tilde{C})$]
\textbf{Input}: The security parameter $\lambda$ and $\alpha$, the circuit $C$ and the garbled circuit $\tilde{C}$.

\textbf{Output}: Contexts of skip gates $\mathcal{S}^\text{PSS}:=\{S_{i,j}^\text{PSS}\mid (W_i,W_j)\in \mathcal{P}\}$.
\newline
\hdashrule[0.5ex]{\linewidth}{1pt}{1mm}

\underline{\textbf{Candidate Selection}}: Establish a sufficient number of skippable wires by using the algorithm \ref{appendix:skippable_wires_searching} and generate a list of candidate skippable wire pairs $\mathcal{P}:=\{(W_i,W_j)\mid (W_i,W_j)\gets\skippair(C)\}$.

\underline{\textbf{Gate Generation}}: Compute the $S_{i,j}^\text{PSS}:=(\bsl{t_0^{i,j}},\bsl{t_1^{i,j}},G_0^{i,j},G_1^{i,j})$ for each skippable wire pair $(W_i,W_j)\in\mathcal{P}$:
\begin{enumerate}
    \item Compute trigger values $t_x^{i,j}\gets \hash_\alpha(s^{i,j},v_x^{i,j})$ for $x\in[4]$ using a generated public salt $s^{i,j}$ which satisfies $\lsb{t_0^{i,j}}\ne \lsb{t_1^{i,j}}$ and each $\bsl{t_x^{i,j}}$ is different.
    \item Compute ciphertexts $G_b^{i,j}\gets \enc_{v_0^{i,j}}^\lambda(R)$ and $G_{1-b}^{i,j}\gets \enc_{v_1^{i,j}}^\lambda(R)$ where $b:=\lsb{t_0^{i,j}}$ and $R:=\tilde{C}(\dots,v_0^{i,j},\dots)=\tilde{C}(\dots,v_1^{i,j},\dots)$.
\end{enumerate}
\end{schemealg}
\begin{schemealg}[\textbf{Procedure} $\evalpss(1^\lambda,1^\alpha,\tilde{C},\mathcal{S}^\text{PSS},X)$]
\textbf{Input}: The security parameter $\lambda$ and $\alpha$, the garbled circuit $\tilde{C}$, the contexts of skip gates $\mathcal{S}^\text{PSS}$ and the garbled input $X$.

\textbf{Output}: If triggered, returns the $\eval(\tilde{C},X)$. Otherwise, no action is taken.
\newline
\hdashrule[0.5ex]{\linewidth}{1pt}{1mm}

\underline{\textbf{Trigger Testing}}: For each $S_{i,j}^\text{PSS}\in\mathcal{S}^\text{PSS}$ and the current input wire pair $v:=(X_i,X_j)$, compute the $t^{i,j}\gets \hash_\alpha(s^{i,j},v)$. If $\bsl{t^{i,j}}$ equals to $\bsl{t_0^{i,j}}$ or $\bsl{t_1^{i,j}}$, then it is triggered and the next step is to be taken. Otherwise, abort the procedure and return nothing.

\underline{\textbf{Evaluation}}: Compute the permute bit $b\gets \lsb{t^{i,j}}$ and decode the final result of $\tilde{C}$ as $\dec_v^\lambda(G_b^{i,j})$.
\end{schemealg}
\caption{Procedure of the plain skipping scheme.}
\label{fig:pss}
\end{figure}

\subsubsection{Chained Skipping Scheme.} In contrast to PSS, CSS could form a decryption chain when evaluating the skip gates. A CSS chain can only be triggered if the previous gates in the chain have failed; this prevents leakage of the trigger state of subsequent gates.

In a similar fashion to PSS, we reuse many notations. Let $n$ be the number of skip gates in the chain. 
The garbler precomputes the context values for the $k$-th gate in the chain, which bring the previous skip gate's information to the current skip gate.
\begin{equation}
\label{eqa:css_context}
\begin{aligned}
c_\alpha^k\gets \hash_\alpha(s^k,c_\alpha^{k-1},T_\alpha(\hash_\alpha(s^k,c_\alpha^{k-1},v_3),\hash_\alpha(s^k,c_\alpha^{k-1},v_4))) \\
c_\lambda^k\gets \hash_\lambda(s^k,c_\lambda^{k-1},T_\lambda(\hash_\lambda(s^k,c_\lambda^{k-1},v_3),\hash_\lambda(s^k,c_\lambda^{k-1},v_4)))
\end{aligned}
\end{equation}
where $s^k$ is the salt for the $k$-th gate and $T_x:\{0,1\}^2\to \{0,1\}$ is an irreversible function that is abelian with respect to its two inputs for $x\in\{\alpha,\lambda\}$. 
Assignment pairs $v_3$ and $v_4$ are the pairs that are not included in the definition of 2-implicative, as opposed to their counterparts $v_0$ and $v_1$.


The generating algorithm for CSS skip gates produces two more chain values, $t_c^k$ and $G_c^k$.
\begin{equation}
\label{eqa:css_chain}
\begin{aligned}
t_c^k\gets \hash_\alpha(s^k,c_\alpha^{k-1},v_3)\oplus \hash_\alpha(s^k,c_\alpha^{k-1},v_4) \\
G_c^k\gets \hash_\lambda(s^k,c_\alpha^{k-1},v_3)\oplus \hash_\lambda(s^k,c_\alpha^{k-1},v_4)
\end{aligned} 
\end{equation}
These chain values are used to convey the other hash values without revealing them explicitly.
If $\hash_x(s^k,c_x^{k-1},v_y)$ for $x\in\{\alpha,\lambda\},y\in\{3,4\}$ is known when the previous gates have not been triggered, the chain values can then be used to calculate the hash $\hash_x(s^k,c_x^{k-1},v_{7-y})$. 


The full CSS is shown in Fig. \ref{fig:css}. 
The pipeline of $\evalcss$'s evaluation is divided into two distinct branches. The first branch updates the context value, while the second is for evaluating the actual results. 

\begin{figure}[htb]
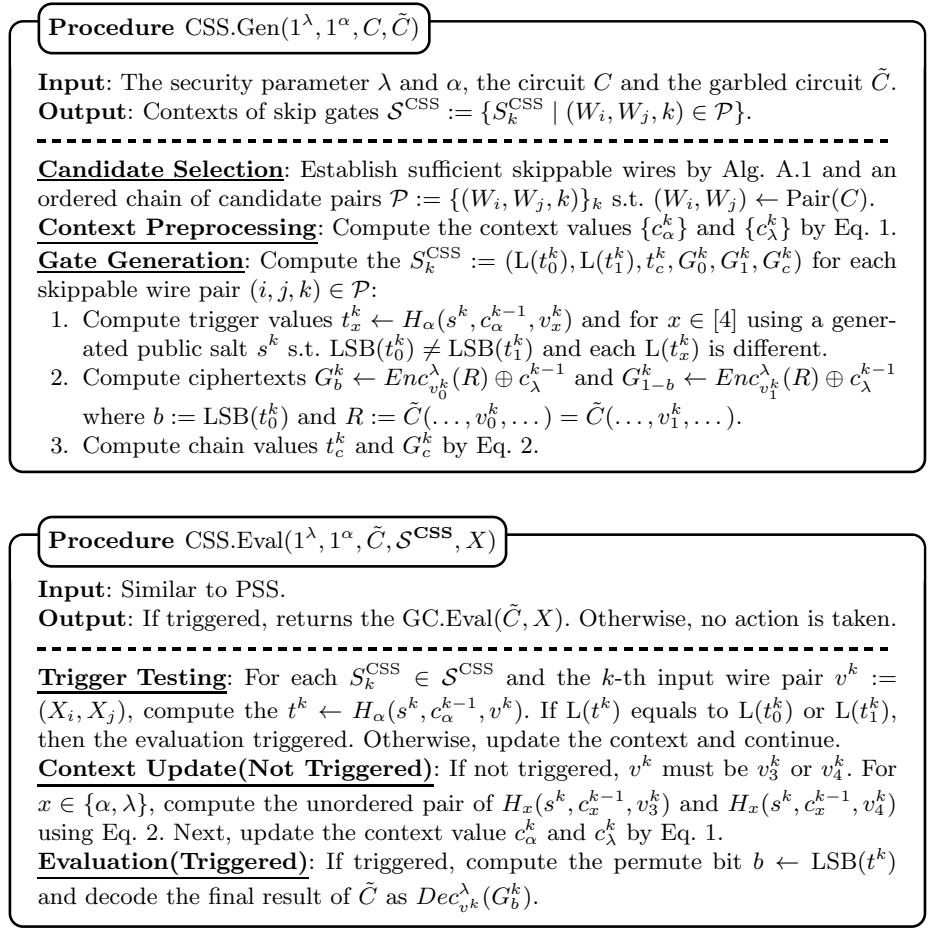

\centering
\begin{schemealg}[\textbf{Procedure} $\gencss(1^\lambda,1^\alpha,C,\tilde{C})$]
\textbf{Input}: The security parameter $\lambda$ and $\alpha$, the circuit $C$ and the garbled circuit $\tilde{C}$.

\textbf{Output}: Contexts of skip gates $\mathcal{S}^\text{CSS}:=\{S_k^\text{CSS}\mid (W_i,W_j,k)\in \mathcal{P}\}$.
\newline
\hdashrule[0.5ex]{\linewidth}{1pt}{1mm}

\underline{\textbf{Candidate Selection}}: Establish sufficient skippable wires by Alg. \ref{appendix:skippable_wires_searching} and an ordered chain of candidate pairs $\mathcal{P}:=\{(W_i,W_j,k)\}_k$ s.t. $(W_i,W_j)\gets\skippair(C)$. 

\underline{\textbf{Context Preprocessing}}: Compute the context values $\{c_\alpha^k\}$ and $\{c_\lambda^k\}$ by Eq. \ref{eqa:css_context}.

\underline{\textbf{Gate Generation}}: Compute the $S_k^\text{CSS}:=(\bsl{t_0^k},\bsl{t_1^k},t_c^k,G_0^k,G_1^k,G_c^k)$ for each skippable wire pair $(i,j,k)\in\mathcal{P}$:
\begin{enumerate}[topsep=0pt,itemsep=-1ex,partopsep=1ex,parsep=1ex]
    \item Compute trigger values $t_x^k\gets \hash_\alpha(s^k,c_\alpha^{k-1},v_x^k)$ and for $x\in[4]$ using a generated public salt $s^k$ s.t. $\lsb{t_0^k}\ne \lsb{t_1^k}$ and each $\bsl{t_x^k}$ is different.
    \item Compute ciphertexts $G_b^k\gets \enc_{v_0^k}^\lambda(R)\oplus c_\lambda^{k-1}$ and $G_{1-b}^k\gets \enc_{v_1^k}^\lambda(R)\oplus c_\lambda^{k-1}$ where $b:=\lsb{t_0^k}$ and $R:=\tilde{C}(\dots,v_0^k,\dots)=\tilde{C}(\dots,v_1^k,\dots)$.
    \item Compute chain values $t_c^k$ and $G_c^k$ by Eq. \ref{eqa:css_chain}.
\end{enumerate}
\end{schemealg}
\begin{schemealg}[\textbf{Procedure} $\evalcss(1^\lambda,1^\alpha,\tilde{C},\mathcal{S}^\text{CSS},X)$]
\textbf{Input}: Similar to PSS.

\textbf{Output}: If triggered, returns the $\eval(\tilde{C},X)$. Otherwise, no action is taken.
\newline
\hdashrule[0.5ex]{\linewidth}{1pt}{1mm}

\underline{\textbf{Trigger Testing}}: For each $S_k^\text{CSS}\in\mathcal{S}^\text{CSS}$ and the $k$-th input wire pair $v^k:=(X_i,X_j)$, compute the $t^k\gets \hash_\alpha(s^k,c_\alpha^{k-1},v^k)$. If $\bsl{t^k}$ equals to $\bsl{t_0^k}$ or $\bsl{t_1^k}$, then the evaluation triggered. Otherwise, update the context and continue.

\underline{\textbf{Context Update(Not Triggered)}}: If not triggered, $v^k$ must be $v_3^k$ or $v_4^k$. For $x\in\{\alpha,\lambda\}$, compute the unordered pair of $\hash_x(s^k,c_x^{k-1},v_3^k)$ and $\hash_x(s^k,c_x^{k-1},v_4^k)$ using Eq. \ref{eqa:css_chain}. Next, update the context value $c_\alpha^k$ and $c_\lambda^k$ by Eq. \ref{eqa:css_context}.

\underline{\textbf{Evaluation(Triggered)}}: If triggered, compute the permute bit $b\gets \lsb{t^k}$ and decode the final result of $\tilde{C}$ as $\dec_{v^k}^\lambda(G_b^k)$.
\end{schemealg}
\caption{Procedure of the chained skipping scheme.}
\label{fig:css}
\end{figure}

\subsubsection{Extension for $n$-implicative Circuits.}
Let $\mathcal{W}_n:=\{W_i^{b_i}\}_{i=1}^{n}$ be the set of input wires that make the circuit $n$-implicative. Extensions to $n$-implicative circuits are based on the following techniques:
\begin{enumerate}
    \item Group the elements of the set $\mathcal{W}_n$ two by two to obtain the set of input wire pairs $\mathcal{V}_n:=\{(W_i^{b_i}, W_j^{b_j})\mid W_i^{b_i}, W_j^{b_j}\in \mathcal{W}_n\}$. Note that pairs can overlap to ensure that all input wires are covered.
    \item Establish an ordered chain of candidate input wire pairs that will never trigger in $\gencss$, except for the last.
    \item Substitute the above-generated candidate chain for the one used in gate generation for $\gencss$.
\end{enumerate}

\subsection{General Pipeline}
\label{sec:general_pipeline}

Generally, the skipping scheme is constructed on top of the common garbling scheme. It works in parallel with the classic garbling circuit of \cite{bellare2012foundations} and leaves the garbling scheme intact. Our modified garbling scheme is shown in Fig. \ref{fig:patched_garbling_scheme}.

\begin{figure}[htb]
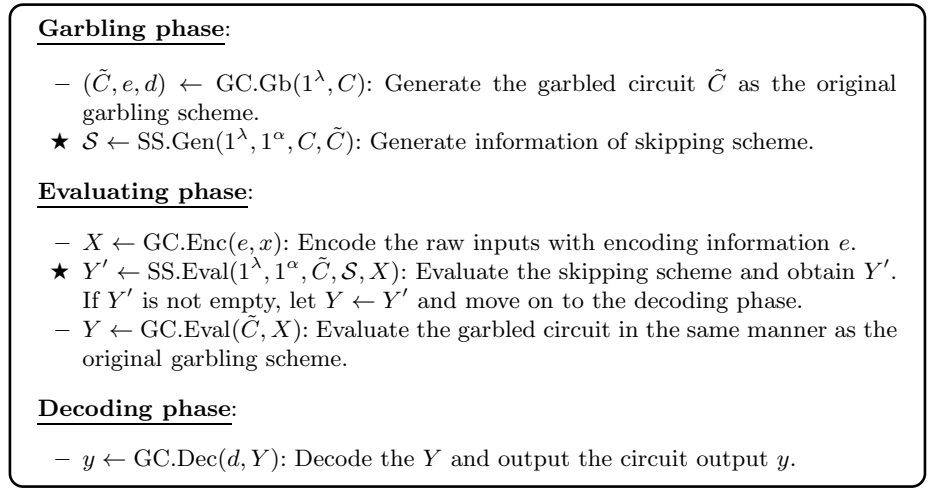

\centering
\begin{normalalg}
\underline{\textbf{Garbling phase}}:
\begin{itemize}
    \item $(\tilde{C},e,d)\gets\garble(1^\lambda,C)$: Generate the garbled circuit $\tilde{C}$ as the original garbling scheme.
    \item[\ding{72}] $\mathcal{S}\gets \genss(1^\lambda,1^\alpha,C,\tilde{C})$: Generate information of skipping scheme.
\end{itemize}

\underline{\textbf{Evaluating phase}}:
\begin{itemize}
    \item $X\gets\encode(e,x)$: Encode the raw inputs with encoding information $e$.
    \item[\ding{72}] $Y'\gets \evalss(1^\lambda,1^\alpha,\tilde{C},\mathcal{S},X)$: Evaluate the skipping scheme and obtain $Y'$. If $Y'$ is not empty, let $Y\gets Y'$ and move on to the decoding phase.
    \item $Y\gets\eval(\tilde{C},X)$: Evaluate the garbled circuit in the same manner as the original garbling scheme.
\end{itemize}

\underline{\textbf{Decoding phase}}:
\begin{itemize}
    \item $y\gets\decode(d,Y)$: Decode the $Y$ and output the circuit output $y$.
\end{itemize}
\end{normalalg}
\caption{Patched garbling scheme with skipping scheme. The algorithm $\mathrm{SS}$ here can be either $\mathrm{PSS}$ or $\mathrm{CSS}$, depending on the exact skipping scheme.}
\label{fig:patched_garbling_scheme}
\end{figure}

\subsection{Efficiency}
\label{sec:efficiency}

Our efficiency estimation considers the skipping scheme's space complexity and computational complexity.

For a typical garbled circuit, we assume that each basic gate consume $2\lambda$ space \cite{zahur2015twohalf} as the starting points. PSS gates require $2\alpha+2\lambda$ space to store two trigger values and two ciphertexts. CSS gates require $3\alpha+3\lambda$ space to store two more chain values. We ignore the salts used in gate generation, as they consume very few bits in almost every case.

An estimation of the computational costs of each gate can be achieved by denoting the time costs of the hash function and the decryption by $t_h$ and $t_d$. We assume that the decryption time of the basic gate is the same as the skip gates, i.e. $t_d$ is used for decryption of both garbled gates and skip gates. 
Whenever a PSS gate is triggered, the evaluation cost is comprised of one hash for testing the PSS and one decryption for obtaining final results. Otherwise, if the PSS gate is not triggered, a hash calculation and a regular gate decryption is performed.
A similar analysis can be performed for CSS costs.

In this work, we estimate the complexity of a circuit consisting of $N$ gates. Let $n_c$ be the number of skip gates.
As each gate is triggered probabilistically, the computational complexity of each gate is analyzed by estimating both its expectation and the worst-case scenario separately.
We denote $\theta$ as the probability of triggering skip gates for general analysis. The uniform case without prior knowledge of inputs is one of the most common cases for our scheme when $\theta=0.5$. Regarding the general case of the $\theta$ trigger chance and the derivation of average costs, we leave all the results in Appendix \ref{appendix:costs}. For future reference, we present a useful claim derived from Proposition \ref{prop:intra_costs}.

\begin{proposition}[Probability of Triggering]
\label{prop:trigger_prob}
In the case of serial and honest execution, if the probability of triggering an individual skip gate is $\theta$, then the $k$-th skip gate has a $\theta(1-\theta)^{k-1}$ probability of being triggered during the entire procedure, whereas the whole skipping scheme has a $1-(1-\theta)^{n_c}$ probability to be triggered.
\end{proposition}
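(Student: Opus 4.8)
The plan is to reduce both claims to an elementary computation involving independent Bernoulli events, the only non-routine ingredient being the justification that the per-gate triggering events are mutually independent. First I would fix the semantics of ``serial and honest execution'': the evaluator processes the skip gates in the prescribed order $1,2,\dots,n_c$, performs the \textbf{Trigger Testing} step of each gate exactly as specified, and halts the skipping procedure as soon as some gate triggers (at which point the patched scheme of Fig.~\ref{fig:patched_garbling_scheme} sets $Y\gets Y'$ and proceeds to decoding). Consequently the $k$-th skip gate is actually reached precisely when none of gates $1,\dots,k-1$ triggered, and the event ``the $k$-th skip gate is triggered during the entire procedure'' is exactly the conjunction
\[
  \bigl(\text{gate }1\text{ not triggered}\bigr)\wedge\cdots\wedge\bigl(\text{gate }k-1\text{ not triggered}\bigr)\wedge\bigl(\text{gate }k\text{ triggered}\bigr).
\]

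Next I would establish independence. For a fixed garbled input $X$, gate $\ell$ triggers iff $\bsl{t^\ell}$ equals $\bsl{t_0^\ell}$ or $\bsl{t_1^\ell}$, where $t^\ell\gets\hash_\alpha(s^\ell,\dots,v^\ell)$ (and, in the CSS case, with the context $c_\alpha^{\ell-1}$ as an additional argument). Because each skip gate is assigned its own fresh public salt $s^\ell$, the $\hash_\alpha$-queries issued by distinct gates are pairwise distinct; modelling $\hash_\alpha$ as a random oracle, the strings $t^1,\dots,t^{n_c}$ are therefore independent and uniform, so the $n_c$ indicator variables ``gate $\ell$ triggered'' are mutually independent, each equal to $1$ with probability $\theta$ by hypothesis. (For CSS one additionally observes that $c_\alpha^{\ell-1}$ is a deterministic function of the transcript revealed by gates $1,\dots,\ell-1$, so conditioning on that path still leaves the $\ell$-th oracle output uniform and independent of the others.) Plugging this into the conjunction above yields
\[
  \Pr[\text{$k$-th gate triggered during the procedure}] = (1-\theta)^{k-1}\,\theta .
\]

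Finally, the whole skipping scheme triggers iff at least one gate triggers, so its failure is the intersection of the $n_c$ independent events ``gate $\ell$ not triggered'', of probability $(1-\theta)^{n_c}$; hence the scheme triggers with probability $1-(1-\theta)^{n_c}$. As a consistency check one verifies $\sum_{k=1}^{n_c}(1-\theta)^{k-1}\theta=\theta\cdot\frac{1-(1-\theta)^{n_c}}{1-(1-\theta)}=1-(1-\theta)^{n_c}$, which matches the two parts of the statement. The only real obstacle is the independence justification; everything else is a one-line geometric-series computation. If one prefers not to lean on the random-oracle heuristic here, the cleanest fallback is to take it as part of the probabilistic model (as the surrounding efficiency analysis in Section~\ref{sec:efficiency} and Appendix~\ref{appendix:costs} implicitly does) that the triggering indicators are i.i.d.\ $\mathrm{Bernoulli}(\theta)$, in which case the argument above goes through verbatim.
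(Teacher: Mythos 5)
Your proof is correct and follows essentially the same route the paper takes implicitly: the paper states the proposition as a consequence of its cost formula in Appendix~\ref{appendix:costs}, which simply treats the per-gate triggering events as independent $\mathrm{Bernoulli}(\theta)$ and reads off the geometric probabilities $\theta(1-\theta)^{k-1}$ and $1-(1-\theta)^{n_c}$, exactly as you do. Your added justification of independence via fresh salts and the random-oracle model is more careful than anything the paper offers and is a welcome strengthening rather than a divergence.
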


A summary of the cost estimation can be found in Table \ref{tab:time_costs}. It is noteworthy that the evaluation costs of skipping scheme decreases quasi-exponentially as $n_c$ grows. 
However, as $n_c$ reflects the number of skip gates in a specific subcircuit, it is highly unpredictable and intractable, depending on the intrinsic characteristics of the circuit. The capacity for skip gates is likely to be greater in circuits with fewer inputs, but this is only empirically proven.
This section focuses on intra-subcircuit costs, but Appendix \ref{appendix:costs} examines inter-subcircuit costs in details.

\begin{table}
\centering
\caption{Estimation of the efficiency of the serial skipping scheme when $\theta=0.5$. 
}
\label{tab:time_costs}
\begin{tabular}{cccc}
\toprule
\textbf{Costs} & \textbf{Baseline} & \textbf{PSS} & \textbf{CSS} \\
\cmidrule(lr{0.2em}){1-1}
\cmidrule(lr{0.2em}){2-2}
\cmidrule(lr{0.2em}){3-3}
\cmidrule(lr{0.2em}){4-4}
\multicolumn{4}{c}{\textit{per gate}} \\
Gate Space & $2\lambda$ & $4\lambda+2\alpha$ & $5\lambda+3\alpha$ \\
Gate Eval. Time & $t_d$ & $t_d+t_h$ & $t_d+t_h$ \textit{or} $t_d+5t_h$ \\
\midrule
\multicolumn{4}{c}{\textit{per circuit}} \\
Circuit Eval. Time (Worst) & $N t_d$ & $N(t_d + t_h)$ & $N(t_d + 5t_h)$ \\
Circuit Eval. Time (Avg.) & $Nt_d$ & $\le$ CSS & $N t_d\cdot 2^{-n_c} + t_d + 10t_h - \tilde{\bigtheta}(2^{-n_c})$ \\
\bottomrule
\end{tabular}
\end{table}





%
%
\section{Security Analysis}
\label{sec:security_analysis}

\subsection{Correctness}

The correctness of our skipping scheme is evident from the fact that it does not impact the regular garbling scheme. In the case of $\mathcal{S}\gets\genss(1^\lambda,1^\alpha,C,\tilde{C})$, it is clear that $\evalss(1^\lambda,1^\alpha,\tilde{C},\mathcal{S},X)$ yields the same result as $\tilde{C}(X)$. Since skipping is disabled if it is not triggered, we only verify the output when it is activated. Let $R$ be the constant final result.

\paragraph{Correctness of PSS.}
Let $S^\text{PSS}:=(\bsl{t_0},\bsl{t_1},G_0,G_1)$ be the context of the skip gate that passes the trigger test. We assume that the $i$-th ($i\in\{0,1\}$) value pair $v_i$ is the input pair to $\evalpss$ and the ciphertexts $G$ in $\genpss$ are rewritten as follows.
\begin{align*}
t_i\gets \hash(s,v_i),&~b_i\gets \lsb{t_i} \\
G_{\lsb{t_0}}\gets \enc_{v_0}^\lambda(R),&~G_{\lsb{t_1}}\gets \enc_{v_1}^\lambda(R)
\end{align*}
We then receive the result.
\begin{align*}
G_{b_i}&=G_{\lsb{t_i}}\gets \enc_{v_i}^\lambda(R) \\
R&\gets \dec_{v_i}^\lambda(G_{b_i})=\dec_{v_i}^\lambda(\enc_{v_i}^\lambda(R))
\end{align*}

\paragraph{Correctness of CSS.}
Suppose CSS gets triggered until $k=n$. For any $k<n$, the CSS gates should get either $v_3^k$ or $v_4^k$ as input. We first prove that the CSS fails the trigger test when $k=m<n$. Assume that we already have the context values $c_\alpha^{m-1}$ and $c_\lambda^{m-1}$ and the context $S_m^{\text{CSS}}$, the $m$-th trigger test then
\begin{enumerate}[topsep=0pt,itemsep=-1ex,partopsep=1ex,parsep=1ex]
    \item calculates $t^m\gets \hash_\alpha(s^m, c_\alpha^{m-1},v_{3}^m)\text{ or }\hash_\alpha(s^m, c_\alpha^{m-1},v_{4}^m)$,
    \item compares $\bsl{t^m}$ with $\bsl{t_0^m}$ and $\bsl{t_1^m}$,
    \item fails and updates the context values $c_\alpha^m$ and $c_\lambda^m$.
\end{enumerate}
Apparently, the iterative trigger test always fails for $k\in[n-1]$. When $k=n$, the CSS gates receive input pair $v_0^k$ or $v_1^k$. Similar to PSS, for the $n$-th gate that survive the trigger test, let $S^\text{CSS}:=(\bsl{t_0^n},\bsl{t_1^n},t_c^n,G_0^n,G_1^n,G_c^n)$ be the context, $v_i^n$ ($i\in\{0,1\}$) be the received input pair and $c_\alpha^{n-1},c_\lambda^{n-1}$ be the context values. Rewrite $\evalcss$ as follows:
\begin{align*}
t_i^n\gets \hash_\alpha(s^n,c_\alpha^{n-1},v_i^n),&~b_i^n\gets \lsb{t_i^n} \\
G_{\lsb{t_0^n}}^n\gets \enc_{v_0^n}^\lambda(R)\oplus c_\lambda^{n-1},&~G_{\lsb{t_1^n}}^n\gets \enc_{v_1^n}^\lambda(R)\oplus c_\lambda^{n-1}
\end{align*}
Similarly, we obtain the same final output as $\tilde{C}(X)$:
\begin{align*}
G_{b_i^n}^n&=G_{\lsb{t_i^n}}^n\gets \enc_{v_i^n}^\lambda(R) \\
R&\gets \dec_{v_i^n}^\lambda(G_{b_i^n})=\dec_{v_i^n}^\lambda(\enc_{v_i^n}^\lambda(R))
\end{align*}

\subsection{Simulation-Based Security}

Security of the skipping scheme in semi-honest scenarios is divided into two categories: untriggered and triggered. In this proof, we demonstrate that the distribution of \emph{view of the evaluation} of the original garbled circuit is indistinguishable from that with the skipping scheme, following the instructions of Lindell \cite{lindell2009proof,lindell2017simulate}. For simplicity, we only prove the security of PSS. Distribution of the view of algorithm $\game$ is denoted by $\view{\game}$. Also, $\view{\game_1}\equiv\view{\game_2}$ indicates that the views of $\game_1$ and $\game_2$ have (computationally) identical distributions.
\begin{definition}[Security of Evaluation with Auxiliary Scheme]
\label{def:security_auxiliary_scheme}
For a garbling scheme $\garblescheme=(\garble,\encode,\eval,\decode)$, the garbled circuit $\tilde{C}\gets\garble(1^\lambda,C)$, garbled input $X\in\{0,1\}^n$, auxiliary scheme $\game$, simulator $\Sim$ and a random oracle $H$, the advantage of the adversary $\mathcal{A}$ is defined as $\adv_{\ad,\eval,\game,\Sim,\Phi}(\lambda):=$
\begin{align*}
\Big\lvert&\Pr[\ad^H(\view{\eval(\Phi(\tilde{C}),X)},\view{\game(\Phi(\tilde{C}), X)})=1] \\
- &\Pr[\ad^H(\view{\eval(\Phi(\tilde{C}),X)},\Sim(\view{\eval(\Phi(\tilde{C}),X)}))=1]\Big\rvert
\end{align*}
A garbling scheme $\garblescheme$ with an auxiliary scheme $\game$ is secure, if 
there exists a PPT simulator $\Sim$, such that for any garbled circuit $\tilde{C}$, garbled input $X$ and PPT adversary $\ad$, the advantage $\adv_{\ad,\eval,\game,\Sim,\Phi}(\lambda)$ is negligible.
\end{definition}


The \emph{view of the skipping scheme} consists of the garbled input $X$, the garbled output $\tilde{C}(X)$, the context of the generated skip gates $\mathcal{S}^{\text{SS}}$, the trigger state $\tau_{\text{st}}$ and the indices of the trigger gates $\tau_{\text{idx}}$. In the \emph{view of the garbled circuit}, these variables have been identified: the garbled input and output $X,\tilde{C}(X)$, the information of circuit $\Phi(\tilde{C})$, and the intermediate result $I(\tilde{C},X)$. The security parameter $\lambda$ and $\alpha$ have been omitted for simplicity.
\begin{align*}
\view{\game(\Phi(\tilde{C}), X)}&:=(X,\tilde{C}(X),\mathcal{S}^{\text{SS}},\tau_{\text{st}},\tau_{\text{idx}}) \\
\view{\eval(\Phi(\tilde{C}),X)}&:=(X,\tilde{C}(X),\Phi(\tilde{C}),I(\tilde{C},X))
\end{align*}

Simulating the view of the scheme is achieved with the following oracles:
\begin{itemize}
    \item $\oracle_\tau(\Phi(\tilde{C}))$: Oracle $\oracle_\tau$ is queried to determine whether the skipping scheme has been triggered.
    \item $\oracle_0(\Phi(\tilde{C}))$: Oracle $\oracle_0$ is queried to retrieve the selected wire pairs $\mathcal{P}$.
    \item $\oracle_1(\Phi(\tilde{C}),i,j)$: Oracle $\oracle_1$ is queried to obtain trigger values $t$ and salts $s$ that satisfy the partial inequality criteria.
    \item $\oracle_2(\Phi(\tilde{C}))$: Oracle $\oracle_2$ is queried to obtain the indices of triggered gates $\tau_\text{idx}$.
\end{itemize}

\subsubsection{Untriggered Cases.}

\begin{theorem}[Simulation-based Privacy of Untriggered Cases]
\label{thm:untriggered_privacy}
Let $n_c$ be the number of skip gates.
The general skipping scheme $\game$ described in Section \ref{sec:skipping_scheme} is simulation-based secure in \emph{untriggered} cases, for $\Phi=\topo=(n,m,l,A,B,\mathcal{W})$, where $\mathcal{W}$ represents the mapping function for ciphertexts of each gate. More precisely, there exists an adversary $\ad$ such that
\begin{align*}
\adv_{\ad,\eval,\game,\Sim,\topo}^\text{UT}(\lambda)\le n_c 2^{-(\lambda-1)}.
\end{align*}
\end{theorem}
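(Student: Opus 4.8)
The plan is to build the simulator $\Sim$ so that it produces a view $(X,\tilde{C}(X),\mathcal{S}^{\text{SS}},\tau_{\text{st}},\tau_{\text{idx}})$ that is statistically close to the real one, and then bound the gap by a union bound over the $n_c$ skip gates. First I would have $\Sim$ use its oracles to fix the ``easy'' parts: query $\oracle_0(\Phi(\tilde{C}))$ to get the candidate wire-pair set $\mathcal{P}$, query $\oracle_\tau$ to learn $\tau_{\text{st}}$ (which in the untriggered case is simply ``no gate triggered''), and set $\tau_{\text{idx}}=\emptyset$ accordingly — these components are determined by $\Phi$ and by the fact that we are in the untriggered branch, so they are reproduced exactly. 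The remaining work is to simulate, for each pair $(W_i,W_j)\in\mathcal{P}$, a skip-gate context $S_{i,j}^{\text{PSS}}=(\bsl{t_0^{i,j}},\bsl{t_1^{i,j}},G_0^{i,j},G_1^{i,j})$ without knowing the constant result label $R$ or the true garbled input on those wires.

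The key observation I would exploit is that in the untriggered case the evaluator never decrypts either $G_0^{i,j}$ or $G_1^{i,j}$, and never learns the preimages $v_0^{i,j},v_1^{i,j}$ under the random oracle $H$ — the only thing the real view reveals about a skip gate is the pair of high-order bitstrings $\bsl{t_0^{i,j}},\bsl{t_1^{i,j}}$ and two ciphertexts. So $\Sim$ samples the $\bsl{\cdot}$ values uniformly at random subject to the stated constraints (all distinct, and consistent $\mathrm{LSB}$ behaviour) — using $\oracle_1$ to obtain salts/trigger values meeting the partial-inequality criteria — and replaces $G_0^{i,j},G_1^{i,j}$ by encryptions of $0^\lambda$ (or random strings) under fresh keys. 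Indistinguishability of the $\bsl{\cdot}$ part is immediate because $H$ is a random oracle and the real $t_x^{i,j}=\hash_\alpha(s^{i,j},v_x^{i,j})$ are uniform to anyone who has not queried $H$ on $v_x^{i,j}$; indistinguishability of the ciphertexts follows from CPA-security of $\Gamma$, since the keys $v_0^{i,j},v_1^{i,j}$ are never exposed in the untriggered branch.

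The main obstacle — and where the factor $n_c 2^{-(\lambda-1)}$ comes from — is bounding the probability that the adversary, interacting with the random oracle $H$, accidentally queries one of the hidden wire-label values $v_x^{i,j}$ (or more precisely one of the $\lambda$-bit keys feeding $\Gamma$), which would let it distinguish the real ciphertext from the simulated one. Each such label is a uniform $\lambda$-bit (or $(\lambda-1)$-bit, accounting for the fixed permute bit) string that the adversary has only $\Phi$-information about, so a single PPT adversary making polynomially many queries hits a given gate's critical value with probability at most $2^{-(\lambda-1)}$; summing over the $n_c$ skip gates gives the claimed bound. I would organize this as a sequence of hybrids $\game_0,\dots,\game_{n_c}$, where $\game_t$ has the first $t$ skip gates simulated and the rest real, show consecutive hybrids differ by at most $2^{-(\lambda-1)}$ via the bad-query event above (plus a negligible CPA term that is absorbed), and conclude $\adv_{\ad,\eval,\game,\Sim,\topo}^{\text{UT}}(\lambda)\le n_c 2^{-(\lambda-1)}$. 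A minor technical point to handle carefully is that $\tilde{C}(X)$ and $I(\tilde{C},X)$ appear in the real evaluation view but not in $\Sim$'s output description; since $\Sim$ is given $\view{\eval(\Phi(\tilde{C}),X)}$ as input, it can simply copy these through, so they contribute nothing to the distinguishing advantage.
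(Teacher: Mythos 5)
Your simulation strategy matches the paper's: the paper's simulator $S_{\text{UT}}$ likewise sets $\tau_{\text{st}}=0$, picks candidate pairs, hashes \emph{random} value pairs (constrained to differ from the actual garbled inputs) to get trigger values, and replaces both ciphertexts by uniform $\lambda$-bit strings; the paper then interpolates through a single intermediate game $\game_0^{\oracle_0,\oracle_1}$ that fixes the wire-pair selection and trigger values via oracles, rather than your per-gate hybrid chain --- a cosmetic difference. Where you genuinely diverge is in the event that accounts for the loss $n_c 2^{-(\lambda-1)}$. The paper does \emph{not} charge this to the adversary querying $H$ on a hidden key; it charges it to a statistical correctness failure of the simulation: a uniformly random ciphertext might, when decrypted under the keys the evaluator actually holds, accidentally yield the true output label $\tilde{C}(X)$, an event that the real scheme excludes. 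With $2n_c$ simulated ciphertexts each colliding with probability $2^{-\lambda}$, the failure probability is $1-(1-2^{-\lambda})^{2n_c}\le 2n_c\cdot 2^{-\lambda}=n_c 2^{-(\lambda-1)}$, which is exactly the stated constant. Your bad-query event, by contrast, occurs with probability roughly $q\cdot 2^{-\lambda}$ per hidden key for an adversary making $q$ oracle queries, so it sums to $\mathrm{poly}(\lambda)\cdot n_c\cdot 2^{-\lambda}$ --- still negligible, but it does not reproduce the bound $n_c 2^{-(\lambda-1)}$ as written, and your attempt to recover the factor $2$ from ``the fixed permute bit'' is not the right accounting (the factor $2$ comes from there being two ciphertexts per skip gate). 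So the proof is conceptually sound and would establish simulation security, but to obtain the theorem's precise bound you should replace the bad-query analysis with the collision-with-$\tilde{C}(X)$ analysis (or restate the bound with an explicit dependence on the query count).
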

\begin{proof}
Simulated experiments for untriggered cases are conducted using the simulator $S_{\text{UT}}$, as well as the games $\game_0^{\oracle_0,\oracle_1}$ and $\game_{\text{UT}}$, and described as follows.
\begin{itemize}
    \item $S_{\text{UT}}(\view{\eval(\Phi(\tilde{C}), X)})$: $S_{\text{UT}}$ described in Fig. \ref{fig:ut_game_simulator} generates the context of skip gates and simulates the view of the actual skipping scheme.
    \item $\game_0^{\oracle_0,\oracle_1}(\Phi(\tilde{C}), X)$: $\game_0^{\oracle_0,\oracle_1}$ given in Fig. \ref{fig:ut_game_0} simulate the context of a skipping scheme with information on the garbled circuit and the garbled input $X$. In this game, we utilize oracles $\oracle_0$ and $\oracle_1$ to show the robustness of the simulator by fixing part of the generation of skip gates.
    \item $\game_{\text{UT}}(\Phi(\tilde{C}), X)$: Real experiment of the skipping scheme in untriggered cases.
\end{itemize}

\begin{figure}[htb]
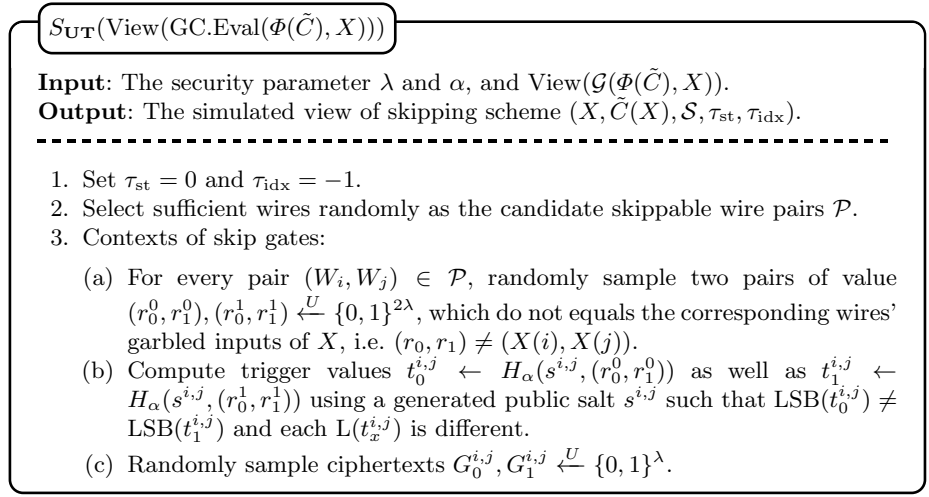

\centering
\begin{schemealg}[$S_{\text{UT}}(\view{\eval(\Phi(\tilde{C}), X)})$]
\textbf{Input}: The security parameter $\lambda$ and $\alpha$, and $\view{\game(\Phi(\tilde{C}), X)}$.

\textbf{Output}: The simulated view of skipping scheme $(X,\tilde{C}(X),\mathcal{S},\tau_{\text{st}},\tau_{\text{idx}})$.
\newline
\hdashrule[0.5ex]{\linewidth}{1pt}{1mm}

\begin{enumerate}[topsep=0pt,itemsep=-1ex,partopsep=1ex,parsep=1ex]
    \item Set $\tau_\text{st}=0$ and $\tau_\text{idx}=-1$.
    \item Select sufficient wires randomly as the candidate skippable wire pairs $\mathcal{P}$.
    \item Contexts of skip gates:
    \begin{enumerate}
        \item For every pair $(W_i,W_j)\in \mathcal{P}$, randomly sample two pairs of value $(r_0^0, r_1^0),(r_0^1, r_1^1)\xleftarrow{U} \{0,1\}^{2\lambda}$, which do not equals the corresponding wires' garbled inputs of $X$, i.e. $(r_0, r_1)\ne (X(i), X(j))$.
        \item Compute trigger values $t_0^{i,j}\gets \hash_\alpha(s^{i,j}, (r_0^0, r_1^0))$ as well as $t_1^{i,j}\gets \hash_\alpha(s^{i,j}, (r_0^1, r_1^1))$ using a generated public salt $s^{i,j}$ such that $\lsb{t_0^{i,j}}\ne \lsb{t_1^{i,j}}$ and each $\bsl{t_x^{i,j}}$ is different.
        \item Randomly sample ciphertexts $G_0^{i,j},G_1^{i,j}\xleftarrow{U} \{0,1\}^\lambda$.
    \end{enumerate}
\end{enumerate}
\end{schemealg}
\caption{The simulator $S_{\text{UT}}$ in untriggered cases.}
\label{fig:ut_game_simulator}
\end{figure}

\begin{figure}[htbp]
\centering
\begin{schemealg}[$\game_0^{\oracle_0,\oracle_1}(\Phi(\tilde{C}), X)$]
\textbf{Input}: The security parameter $\lambda$ and $\alpha$, $\Phi(\tilde{C})$, and the garbled input $X$.

\textbf{Output}: The simulated view of this game $(X,\tilde{C}(X),\mathcal{S},\tau_{\text{st}},\tau_{\text{idx}})$.
\newline
\hdashrule[0.5ex]{\linewidth}{1pt}{1mm}

\begin{enumerate}[topsep=0pt,itemsep=-1ex,partopsep=1ex,parsep=1ex]
    \item Set $\tau_\text{st}=0$ and $\tau_\text{idx}=-1$.
    \item Retrieve the selected wire pairs from oracle $\mathcal{P}\gets \oracle_0(\Phi(\tilde{C}))$.
    \item Contexts of skip gates:
    \begin{enumerate}
        \item For every pair $(W_i,W_j)\in\mathcal{P}$, compute trigger values and salts from oracle: $(t_0^{i,j},t_1^{i,j})\gets \oracle_1(\Phi(\tilde{C}), i,j)$ with the corresponding salt $s^{i,j}$.
        \item Randomly sample ciphertexts $G_0^{i,j},G_1^{i,j}\xleftarrow{U} \{0,1\}^\lambda$.
    \end{enumerate}
\end{enumerate}
\end{schemealg}
\caption{The game $\game_0^{\oracle_0,\oracle_1}$ in untriggered cases.}
\label{fig:ut_game_0}
\end{figure}

Now we will demonstrate that the simulators are indistinguishable from the real protocol.
\begin{itemize}
    \item $\view{S_\text{UT}}\equiv \view{\game_0^{\oracle_0,\oracle_1}}$: The two games differ in two ways. The first is regarding the skip gates chosen, and the second is regarding the generation of the actual trigger values. Since the skipping scheme remains untriggered, any shift in the distribution of the chosen skip gates will not affect the overall view of the skipping scheme. Given that the trigger values are created by the hash function $\hash_\alpha$ uniformly, the distribution between the one derived from oracle $\oracle_1(\Phi(\tilde{C}), i,j)$ and the sample one should be identical.
    \item $\view{\game_0^{\oracle_0,\oracle_1}}\equiv \view{\game_\text{UT}}$: The only difference between these two games is the ciphertexts. Using ciphertexts in real experiments will prevent malicious decryption from obtaining the actual output $\tilde{C}(X)$. However, ciphertexts in simulated games are sampled uniformly and may fail to meet the criteria mentioned above $\dec_v^\lambda(G^{i,j})\ne\tilde{C}(X)$ for all two ciphertexts of the skip gates. The probability of failure is $1-(1-2^{-\lambda})^{2n_c}\le n_c 2^{-(\lambda-1)}$.
\end{itemize}
\end{proof}

\subsubsection{Triggered Cases.}

\begin{theorem}[Simulation-based Privacy of Triggered Cases]
\label{thm:triggered_privacy}
Let $n_c$ be the number of skip gates.
The general skipping scheme $\game$ described in Section \ref{sec:skipping_scheme} is secure based on simulation in the \emph{triggered} cases, for $\Phi=\topo=(n,m,l,A,B,\mathcal{W})$, where $\mathcal{W}$ represents the mapping function for the ciphertexts of each gate. More precisely, there exists an adversary $\ad$ such that
\begin{align*}
\adv_{\ad,\eval,\game,\Sim,\topo}^\text{T}(\lambda)\le (2n_c-1)2^{-\lambda}.
\end{align*}
\end{theorem}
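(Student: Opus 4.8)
The plan is to reuse the hybrid skeleton of Theorem~\ref{thm:untriggered_privacy}, adapting each game to the event that exactly one skip gate fires (by Proposition~\ref{prop:trigger_prob} the trigger, if any, is unique in serial honest execution). Concretely, I would introduce a simulator $S_\text{T}$, an intermediate game $\game_0^{\oracle_0,\oracle_1,\oracle_2}$ that fixes part of the gate generation through the oracles, and the real experiment $\game_\text{T}$, and then bound $\adv^\text{T}_{\ad,\eval,\game,\Sim,\topo}(\lambda)$ by the sum of the two neighbouring statistical distances. The simulator first queries $\oracle_\tau$ and $\oracle_2$ to learn that the scheme is triggered and at which index $k=\tau_\text{idx}$, and sets $\tau_\text{st}=1$. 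It then selects candidate wire pairs $\mathcal{P}$ at random, and for every pair other than the $k$-th it produces a context exactly as $S_\text{UT}$ does in Fig.~\ref{fig:ut_game_simulator}: value pairs that differ from the garbled inputs carried by $X$, trigger values $t_0,t_1$ obtained by hashing those value pairs under a salt meeting the partial-inequality constraints, and two uniformly sampled ciphertexts $G_0,G_1$. For the $k$-th pair $(W_i,W_j)$ it instead uses the genuine garbled pair $(X_i,X_j)$ present in $X$: it computes $t\gets\hash_\alpha(s,(X_i,X_j))$, plants $\bsl{t}$ as one of $\bsl{t_0^k},\bsl{t_1^k}$ so that the trigger test passes, sets the permute bit $b\gets\lsb{t}$, puts $G_b\gets\enc_{(X_i,X_j)}^\lambda(\tilde{C}(X))$ using the output available in its view, and samples the decoy $G_{1-b}$ uniformly.

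I would then establish the two indistinguishability claims. For $\view{S_\text{T}}\equiv\view{\game_0^{\oracle_0,\oracle_1,\oracle_2}}$, the argument for all non-triggered gates is identical to the untriggered proof: conditioned on the trigger not occurring at those indices, the distribution of the chosen pairs is immaterial, and the trigger values, being images of $\hash_\alpha$, are uniform and therefore match those returned by $\oracle_1$. The new point concerns the $k$-th gate, where one checks that planting the trigger value from the real input pair, together with the induced permute bit and the choice of which of $G_0^k,G_1^k$ is the honest ciphertext, reproduces exactly the distribution of $\genpss$ (resp. $\gencss$) conditioned on ``triggered at $k$''; this holds because $\hash_\alpha$ on the true pair is uniform, the salt constraints are the same in both experiments, and — crucially — whether the honest garbler used $v_0^k$ or $v_1^k$ as the true input is itself hidden by the random permutation of the two ciphertexts, so $S_\text{T}$ does not need to know it. For $\view{\game_0^{\oracle_0,\oracle_1,\oracle_2}}\equiv\view{\game_\text{T}}$, the games differ only in the ciphertexts: of the $2n_c$ ciphertexts held across all skip gates, exactly one — the planted $G_b^k$ — is a genuine encryption whose decryption under $(X_i,X_j)$ is $\tilde{C}(X)$, while the remaining $2n_c-1$ are real encryptions in $\game_\text{T}$ but uniform decoys in $\game_0$, and the view diverges only if one of those decoys happens to decrypt to $\tilde{C}(X)$ under its key. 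A union bound over those $2n_c-1$ strings gives failure probability at most $1-(1-2^{-\lambda})^{2n_c-1}\le(2n_c-1)2^{-\lambda}$, and conditioned on the complement the two views coincide, which yields the claimed bound.

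The step I expect to be the main obstacle is the first indistinguishability claim at the triggered index: one must argue carefully that conditioning the whole experiment on ``the trigger fires exactly at gate $k$'' does not skew the marginal distributions of the other gates' contexts — this relies on the salt-induced partial inequalities being drawn independently across gates, extending the corresponding observation in Theorem~\ref{thm:untriggered_privacy} — and that the simulator's planted $k$-th context is a perfect copy of the real one despite $S_\text{T}$ knowing neither the hidden value pairs $v_0^k,v_1^k$ nor which of them equals $(X_i,X_j)$. Everything downstream, in particular the ciphertext-collision bound, is then a routine union-bound calculation as in the untriggered case.
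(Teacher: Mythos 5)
Your proposal matches the paper's proof in all essentials: the same simulator that plants the real trigger value $\hash_\alpha(s,(X_i,X_j))$ and the honest ciphertext $\enc_{(X_i,X_j)}^\lambda(R)$ at the triggered gate while sampling the remaining $2n_c-1$ ciphertexts uniformly, the same hybrid passage through oracle-assisted intermediate games handling wire-pair selection and trigger values, and the same union bound $1-(1-2^{-\lambda})^{2n_c-1}\le(2n_c-1)2^{-\lambda}$ on the decoy-collision event. The only deviation is that the paper's simulator samples $\tau_\text{idx}$ itself according to Proposition~\ref{prop:trigger_prob} and devotes a separate hybrid with $\oracle_2$ to justifying this, whereas you let the simulator query $\oracle_2$ directly; to conform to Definition~\ref{def:security_auxiliary_scheme} the simulator must not see the real trigger index, but the fix is exactly the sampling step you already cite, so this is cosmetic rather than a gap.
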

\begin{proof}
In triggered cases, we consider simulator $S_\text{T}$ in the simulated experiments of Definition \ref{def:security_auxiliary_scheme}, and the games $\game$ as below.
\begin{itemize}
    \item $S_\text{T}(\view{\eval(\Phi(\tilde{C}), X)})$: $S_{\text{T}}$ in Fig. \ref{fig:t_game_simulator} generates the context of skip gates, the trigger state, and simulates the view of the actual skipping scheme. 
    \item $\game_\tidx{0}^{\oracle_2}(\Phi(\tilde{C}), X)$: $\game_\tidx{0}^{\oracle_2}$ described in Fig. \ref{fig:t_game_0} simulates the context of skip gates. Queries to oracle $\oracle_2$ constrain the selection of triggered gates, while eliminating the shift of distribution.
    \item $\game_\tidx{1}^{\oracle_0,\oracle_1,\oracle_2}(\Phi(\tilde{C}), X)$: $\game_\tidx{1}^{\oracle_0,\oracle_1,\oracle_2}$ in Fig. \ref{fig:t_game_1} simulates the context with two additional oracles. Oracles $\oracle_0$ and $\oracle_1$ in this game simulate the real skipping scheme with partially predetermined contexts, as in the untriggered cases.
    \item $\game_\text{T}$: Real experiment of the skipping scheme in triggered cases.
\end{itemize}

\begin{figure}[htb]
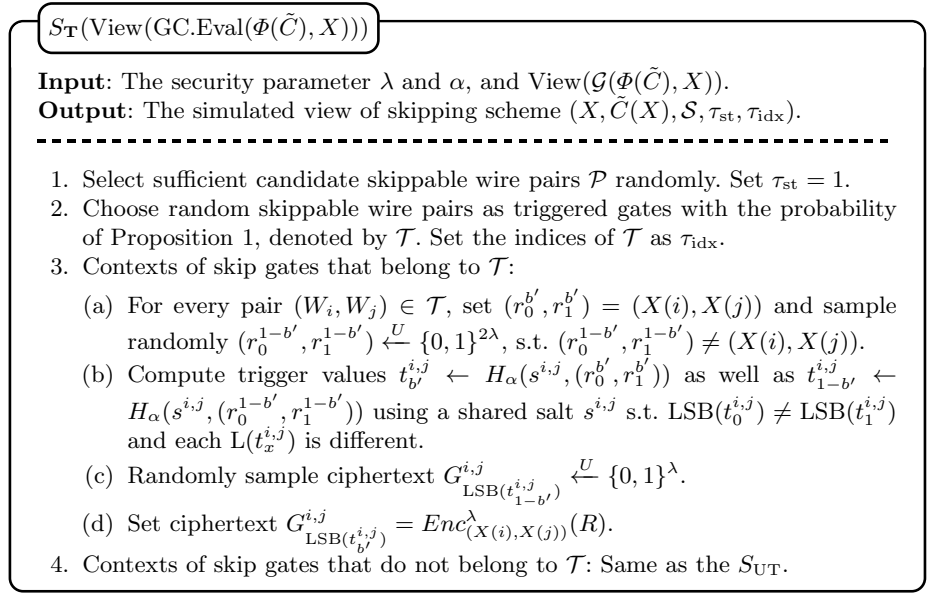

\centering
\begin{schemealg}[$S_{\text{T}}(\view{\eval(\Phi(\tilde{C}), X)})$]
\textbf{Input}: The security parameter $\lambda$ and $\alpha$, and $\view{\game(\Phi(\tilde{C}), X)}$.

\textbf{Output}: The simulated view of skipping scheme $(X,\tilde{C}(X),\mathcal{S},\tau_{\text{st}},\tau_{\text{idx}})$.
\newline
\hdashrule[0.5ex]{\linewidth}{1pt}{1mm}

\begin{enumerate}[topsep=0pt,itemsep=-1ex,partopsep=1ex,parsep=1ex]
    \item Select sufficient candidate skippable wire pairs $\mathcal{P}$ randomly. Set $\tau_\text{st}=1$.
    \item Choose random skippable wire pairs as triggered gates with the probability of Proposition \ref{prop:trigger_prob}, denoted by $\mathcal{T}$. Set the indices of $\mathcal{T}$ as $\tau_\text{idx}$.
    \item Contexts of skip gates that belong to $\mathcal{T}$:
    \begin{enumerate}[topsep=0pt,itemsep=-1ex,partopsep=1ex,parsep=1ex]
        \item For every pair $(W_i,W_j)\in \mathcal{T}$, set $(r_0^{b'}, r_1^{b'})=(X(i),X(j))$ and sample randomly $(r_0^{1-b'}, r_1^{1-b'})\xleftarrow{U} \{0,1\}^{2\lambda}$, s.t. $(r_0^{1-b'}, r_1^{1-b'})\ne (X(i),X(j))$.
        \item Compute trigger values $t_{b'}^{i,j}\gets \hash_\alpha(s^{i,j}, (r_0^{b'}, r_1^{b'}))$ as well as $t_{1-b'}^{i,j}\gets \hash_\alpha(s^{i,j}, (r_0^{1-b'}, r_1^{1-b'}))$ using a shared salt $s^{i,j}$ s.t. $\lsb{t_0^{i,j}}\ne \lsb{t_1^{i,j}}$ and each $\bsl{t_x^{i,j}}$ is different.
        \item Randomly sample ciphertext $G_{\lsb{t_{1-b'}^{i,j}}}^{i,j}\xleftarrow{U} \{0,1\}^\lambda$.
        \item Set ciphertext $G_{\lsb{t_{b'}^{i,j}}}^{i,j} = \enc_{(X(i),X(j))}^\lambda(R)$.
    \end{enumerate}
    \item Contexts of skip gates that do not belong to $\mathcal{T}$: Same as the $S_\text{UT}$.
\end{enumerate}
\end{schemealg}
\caption{The simulator $S_{\text{T}}$ in triggered cases.}
\label{fig:t_game_simulator}
\end{figure}

\begin{figure}[htb]
\centering
\begin{schemealg}[$\game_\tidx{0}^{\oracle_2}(\Phi(\tilde{C}), X)$]
\textbf{Input}: The security parameter $\lambda$ and $\alpha$, $\Phi(\tilde{C})$, and the garbled input $X$.

\textbf{Output}: The simulated view of this game $(X,\tilde{C}(X),\mathcal{S},\tau_{\text{st}},\tau_{\text{idx}})$.
\newline
\hdashrule[0.5ex]{\linewidth}{1pt}{1mm}

\begin{enumerate}[topsep=0pt,itemsep=-1ex,partopsep=1ex,parsep=1ex]
    \item Set $\tau_\text{st}=1$ and $\tau_\text{idx}\gets \oracle_2(\Phi(\tilde{C}))$.
    \item Select a sufficient number of wires randomly (with the probability of Proposition \ref{prop:trigger_prob}) containing $\tau_\text{idx}$ as the candidate skippable wire pairs $\mathcal{P}$, i.e. $\tau_\text{idx}\subseteq \mathcal{P}$.
    \item Contexts of skip gates that belong to $\mathcal{T}$: Same as the $S_\text{T}$.
    \item Contexts of skip gates that do not belong to $\mathcal{T}$: Same as the $S_\text{T}$.
\end{enumerate}
\end{schemealg}
\caption{The game $\game_\tidx{0}^{\oracle_2}$ in triggered cases.}
\label{fig:t_game_0}
\end{figure}

\begin{figure}[htbp]
\centering
\begin{schemealg}[$\game_\tidx{1}^{\oracle_0,\oracle_1,\oracle_2}(\Phi(\tilde{C}), X)$]
\textbf{Input}: The security parameter $\lambda$ and $\alpha$, $\Phi(\tilde{C})$, and the garbled input $X$.

\textbf{Output}: The simulated view of this game $(X,\tilde{C}(X),\mathcal{S},\tau_{\text{st}},\tau_{\text{idx}})$.
\newline
\hdashrule[0.5ex]{\linewidth}{1pt}{1mm}

\begin{enumerate}[topsep=0pt,itemsep=-1ex,partopsep=1ex,parsep=1ex]
    \item Set $\tau_\text{st}=1$ and $\tau_\text{idx}\gets \oracle_2(\Phi(\tilde{C}))$.
    \item Retrieve the selected wire pairs from oracle $\mathcal{P}\gets \oracle_0(\Phi(\tilde{C}))$.
    \item Contexts of skip gates that belong to $\mathcal{T}$:
    \begin{itemize}
        \item For every pair $(W_i,W_j)\in\mathcal{P}$, compute trigger values and salts from oracle: $(t_0^{i,j},t_1^{i,j})\gets \oracle_1(\Phi(\tilde{C}), i,j)$ with the corresponding salt $s^{i,j}$. Either $t_0^{i,j}$ or $t_1^{i,j}$ must be equal to the hash of input $t_{b'}^{i,j}:=\hash_\alpha(s^{i,j}, (X(i),X(j)))$.
        \item Randomly sample ciphertext $G_{\lsb{t_{1-b'}^{i,j}}}^{i,j}\xleftarrow{U} \{0,1\}^\lambda$.
        \item Set ciphertext $G_{\lsb{t_{b'}^{i,j}}}^{i,j} = \enc_{(X(i),X(j))}^\lambda(R)$.
    \end{itemize}
    \item Contexts of skip gates that do not belong to $\mathcal{T}$: Same as the $\game_0^{\oracle_0,\oracle_1}$.
\end{enumerate}
\end{schemealg}
\caption{The game $\game_\tidx{1}^{\oracle_0,\oracle_1,\oracle_2}$ in triggered cases.}
\label{fig:t_game_1}
\end{figure}

Next, we will demonstrate that the simulation and the real experiment are indistinguishable. 
\begin{itemize}
    \item $\view{S_\text{T}}\equiv\view{\game_\tidx{0}^{\oracle_2}}$: In the game $\game_\tidx{0}^{\oracle_2}$, oracle $\oracle_2$ is queried to access the indices of the current triggered gates. Since the simulator selects these gates based on the probability of Proposition \ref{prop:trigger_prob}, their views cannot be distinguished from each other.
    \item $\view{\game_\tidx{0}^{\oracle_2}}\equiv\view{\game_\tidx{1}^{\oracle_0,\oracle_1,\oracle_2}}$: As with the untriggered cases, these two games differ in two fundamentally different ways. The distribution of views remains the same.
    \item $\view{\game_\tidx{1}^{\oracle_0,\oracle_1,\oracle_2}}\equiv\view{\game_\text{T}}$: The only difference between these two games is the ciphertexts. The distribution of the ciphertexts is uniformly sampled based on the assumption of a CPA-secure encryption scheme. However, the ciphertexts for invalid inputs in simulated games are sampled uniformly and may fail to reject malicious decryptions. The probability of failure is $1-(1-2^{-\lambda})^{2n_c - |\mathcal{T}|}\le (2n_c - |\mathcal{T}|)2^{-\lambda}\le (2n_c-1)2^{-\lambda}$.
\end{itemize}
\end{proof}

\subsubsection{Oracle to Trigger State.}
\label{sec:oracle_to_trigger_state}

\begin{theorem}[Simulation-based Privacy with Trigger State]
\label{thm:privacy_with_trigger_state}
Let $n_c$ be the number of skip gates.
The general skipping scheme $\game$ described in Section \ref{sec:skipping_scheme} is simulation-based secure with the oracle $\oracle_\tau$ to the trigger state $\tau_\text{st}$, for $\Phi=\topo=(n,m,l,A,B,\mathcal{W})$, where $\mathcal{W}$ represents the mapping function for the ciphertexts of each gate. More precisely, there exists an adversary $\ad$ such that
\begin{align*}
\adv_{\ad,\eval,\game,\Sim,\topo}^{\oracle_\tau}(\lambda)\le (4n_c - 1)2^{-\lambda}.
\end{align*}
\end{theorem}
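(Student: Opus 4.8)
The plan is to obtain this bound as a straightforward composition of Theorems~\ref{thm:untriggered_privacy} and~\ref{thm:triggered_privacy}, by case analysis on the trigger state. First I would exhibit a combined simulator $S_{\oracle_\tau}$ that, on input $\view{\eval(\Phi(\tilde{C}),X)}$, queries $\oracle_\tau(\Phi(\tilde{C}))$ to learn the single bit $\tau_\text{st}$; if $\tau_\text{st}=0$ it runs $S_\text{UT}$ of Fig.~\ref{fig:ut_game_simulator}, and if $\tau_\text{st}=1$ it runs $S_\text{T}$ of Fig.~\ref{fig:t_game_simulator}. Both sub-simulators already emit a view of the shape $(X,\tilde{C}(X),\mathcal{S},\tau_\text{st},\tau_\text{idx})$ with the appropriate value of $\tau_\text{st}$ hard-wired, so $S_{\oracle_\tau}$ has the right output format; the only role of the extra oracle is to supply that one consistent bit, which in Theorems~\ref{thm:untriggered_privacy} and~\ref{thm:triggered_privacy} was assumed known a priori.

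Next I would set up a hybrid chain that branches on $\tau_\text{st}$ and, conditioned on each value, reuses verbatim the game sequences already analysed: on the untriggered branch the chain $\game_\text{UT}\to\game_0^{\oracle_0,\oracle_1}\to S_\text{UT}$, and on the triggered branch the chain $\game_\text{T}\to\game_\tidx{1}^{\oracle_0,\oracle_1,\oracle_2}\to\game_\tidx{0}^{\oracle_2}\to S_\text{T}$. The point to check at the branching step is that $\tau_\text{st}$ — and, through it, the index set $\tau_\text{idx}$ — is distributed identically in the real experiment $\game_\text{T}/\game_\text{UT}$ and under $S_{\oracle_\tau}$: $\tau_\text{st}$ is exactly the bit returned by $\oracle_\tau$, and Proposition~\ref{prop:trigger_prob} pins down the induced distribution on $\tau_\text{idx}$, so no distinguishing advantage is introduced there. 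All the statistical slack is therefore confined to the two inner chains, where it comes solely from the failure event that a uniformly sampled ciphertext of some skip gate happens to decrypt to $\tilde{C}(X)$.

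Finally, applying the triangle inequality along the combined chain, any PPT adversary's advantage is at most the sum of the advantages it could achieve against the untriggered and the triggered simulators, so that
\begin{align*}
\adv_{\ad,\eval,\game,\Sim,\topo}^{\oracle_\tau}(\lambda)\le \adv_{\ad,\eval,\game,\Sim,\topo}^\text{UT}(\lambda)+\adv_{\ad,\eval,\game,\Sim,\topo}^\text{T}(\lambda)\le n_c 2^{-(\lambda-1)}+(2n_c-1)2^{-\lambda}=(4n_c-1)2^{-\lambda}.
\end{align*}
I expect the main obstacle to be precisely the branching step: one has to argue carefully that exposing the trigger state (and hence $\tau_\text{idx}$) does not let the adversary correlate the two regimes, i.e.\ that conditioning the real and simulated views on $\tau_\text{st}=b$ leaves exactly the distributions handled by the two earlier theorems, so that the failure probabilities $n_c 2^{-(\lambda-1)}$ and $(2n_c-1)2^{-\lambda}$ simply add. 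Everything else is bookkeeping over the already-established hybrids.
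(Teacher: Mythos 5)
Your proposal is correct and follows essentially the same route as the paper, whose proof of this theorem is simply the one-line observation that the bounds of Theorems~\ref{thm:untriggered_privacy} and~\ref{thm:triggered_privacy} add up to $n_c 2^{-(\lambda-1)}+(2n_c-1)2^{-\lambda}=(4n_c-1)2^{-\lambda}$. Your branching simulator and the careful check that conditioning on $\tau_\text{st}$ reduces each branch to the already-analysed hybrids is a more explicit writeup of exactly that composition.
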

\begin{proof}
Theorems \ref{thm:untriggered_privacy} and \ref{thm:triggered_privacy} imply the upper bound from $\adv^\text{UT}(\lambda)$ and $\adv^\text{T}(\lambda)$.
\end{proof}


\begin{lemma}
\label{lem:gc_oracle_equiv}
The security of evaluation of garbling scheme $\eval^{\oracle_\tau}(\Phi(\tilde{C}),X)$ with the oracle $\oracle_\tau$ to the trigger state is \emph{computationally indistinguishable} from the security of the evaluation of the skipping scheme.
\end{lemma}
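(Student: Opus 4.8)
The plan is to obtain the lemma essentially as a repackaging of Theorem~\ref{thm:privacy_with_trigger_state}, using the standard fact that a process which is simulation-based secure relative to a second process is computationally indistinguishable from it. First I would fix the two distributions to be compared in the sense of Definition~\ref{def:security_auxiliary_scheme}: on one side the view $\view{\eval(\Phi(\tilde{C}),X)}=(X,\tilde{C}(X),\Phi(\tilde{C}),I(\tilde{C},X))$ of the ordinary garbled-circuit evaluation augmented with oracle access to $\oracle_\tau$; on the other side the view $\view{\game(\Phi(\tilde{C}),X)}=(X,\tilde{C}(X),\mathcal{S},\tau_\text{st},\tau_\text{idx})$ of the skipping-scheme evaluation $\evalss$. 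The target is that no PPT adversary (with access to $H$ and $\oracle_\tau$) separates these with non-negligible advantage, which is exactly the advantage quantity $\adv^{\oracle_\tau}_{\ad,\eval,\game,\Sim,\topo}(\lambda)$ already bounded in Theorem~\ref{thm:privacy_with_trigger_state}.

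Next I would exhibit the bridging simulator $\Sim^{\oracle_\tau}$ underlying that theorem: on input $\view{\eval(\Phi(\tilde{C}),X)}$ it queries $\oracle_\tau$ for $\tau_\text{st}$ and branches — if $\tau_\text{st}=0$ it runs $S_\text{UT}$ of Fig.~\ref{fig:ut_game_simulator} and outputs $\tau_\text{idx}=-1$, and if $\tau_\text{st}=1$ it runs $S_\text{T}$ of Fig.~\ref{fig:t_game_simulator}, which samples the triggered set $\mathcal{T}$ with the probabilities of Proposition~\ref{prop:trigger_prob} and sets $\tau_\text{idx}$ to its indices. This $\Sim^{\oracle_\tau}$ is PPT and is a function of $(\view{\eval},\oracle_\tau)$, and by Theorems~\ref{thm:untriggered_privacy} and~\ref{thm:triggered_privacy} its output lies within statistical distance $(4n_c-1)2^{-\lambda}$ of $\view{\game}$, up to the CPA advantage of $\Gamma$. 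Hence the pair $(\view{\eval(\Phi(\tilde{C}),X)}$ with $\oracle_\tau$, $\view{\game})$ is indistinguishable from the real-versus-simulated pair $(\view{\eval(\Phi(\tilde{C}),X)}$ with $\oracle_\tau$, $\Sim^{\oracle_\tau}(\view{\eval}))$, which is precisely the claimed computational indistinguishability of the two evaluations. For the converse direction I would note that $\view{\eval(\Phi(\tilde{C}),X)}$ together with $\oracle_\tau$ is itself PPT-reconstructible from $\view{\game}$ up to negligible error — $X$ and $\tilde{C}(X)$ appear verbatim, $\tau_\text{st}$ answers the oracle, $\Phi=\topo$ is determined by the public topology and the ciphertext-mapping $\mathcal{W}$ already exposed through $\mathcal{S}$, and $I(\tilde{C},X)$ is obtained by running $\eval$ honestly on $X$ — so composing the two reductions yields the two-sided equivalence.

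The main obstacle is not the algebra but justifying that the single bit $\tau_\text{st}$ delivered by $\oracle_\tau$ is \emph{all} the extra leakage of $\game$ over plain $\eval$: $\view{\game}$ additionally carries the concrete triggered indices $\tau_\text{idx}$ and the full skip-gate contexts $\mathcal{S}=\{S_{i,j}^\text{PSS}\}$, and one must argue these add no distinguishing power once $\tau_\text{st}$ is known. Concretely, conditioned on $\tau_\text{st}$ the indices $\tau_\text{idx}$ follow only the public law of Proposition~\ref{prop:trigger_prob}; the non-triggered candidate wire pairs may be drawn uniformly at random (their distribution is irrelevant since those gates never fire); the trigger remainders $\bsl{t_0},\bsl{t_1}$ are uniform conditioned on the public distinctness constraint because $\hash_\alpha$ is a random oracle; and the ciphertexts $G_0,G_1$ are indistinguishable from uniform by CPA-security of $\Gamma$, the only slack being the $2^{-\lambda}$-per-gate chance that a uniform string decrypts to a valid label. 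All of these facts are already isolated in the hybrid games $\game_0^{\oracle_0,\oracle_1}$, $\game_\tidx{0}^{\oracle_2}$ and $\game_\tidx{1}^{\oracle_0,\oracle_1,\oracle_2}$ of Theorems~\ref{thm:untriggered_privacy} and~\ref{thm:triggered_privacy}, so the proof of the lemma will mainly consist of reassembling them into the ``two evaluations'' form that the subsequent symmetry-based analysis consumes.
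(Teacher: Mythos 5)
Your proposal is correct and follows exactly the route the paper intends: the paper gives no explicit proof of this lemma, presenting it as an immediate consequence of Theorem~\ref{thm:privacy_with_trigger_state} (itself just the combination of Theorems~\ref{thm:untriggered_privacy} and~\ref{thm:triggered_privacy}), and your reconstruction — a single simulator that queries $\oracle_\tau$ and branches into $S_\text{UT}$ or $S_\text{T}$, plus the easy converse reconstruction of $\view{\eval}$ from $\view{\game}$ — is precisely that argument made explicit. Your closing observation that the real content is showing $\tau_\text{st}$ to be the \emph{only} extra leakage (with $\tau_\text{idx}$, the trigger remainders, and the ciphertexts disposed of by Proposition~\ref{prop:trigger_prob}, the random oracle, and CPA security respectively) is exactly what the hybrid games of the two theorems already isolate, so nothing further is needed.
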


The obliviousness of gate functions $g$ is critical for the security of $\eval^{\oracle_\tau}$. A leaked circuit with gate functions $\Phi_\text{circ}(\tilde{C})$ may allow the evaluator to obtain the triggered gates and acquire the final result with significant probability because of the \emph{circuit satisfiability problems} (CIRCUIT-SAT) or \emph{Boolean satisfiability problems} (BOOLEAN-SAT). Hence, the skipping scheme can only be applied to gate-hiding circuits.

\begin{lemma}[Symmetry of Gate-hiding Garbled Circuits]
\label{lem:symmetry_gc}
Let $C$ be a circuit and $(\tilde{C}, e, d)\gets \garble(1^\lambda, C)$, with $\Phi=\Phi_\text{topo}=(n,m,l,A,B,\mathcal{W})$. Given a fixed garbled input $X$, for any PPT adversary $\ad$ and $y\in \{0,1\}^n$, there exists encoding information $e'$ and negligible $\epsilon(\lambda)$ s.t. $X=\encode(e', y)$ and
\begin{align*}
\Big\lvert&\Pr[(\Phi(\tilde{C}),e,d)\gets\garble(1^\lambda,C):\ad(\Phi(\tilde{C}),e,d)=1] \\ 
-&\Pr[(\Phi(\tilde{C}),e',d)\gets\garble(1^\lambda,C):\ad(\Phi(\tilde{C}),e',d)=1]\Big\rvert\le \epsilon(\lambda)
\end{align*}
\end{lemma}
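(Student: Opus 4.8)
The plan is to exploit the fact that, under topology leakage $\Phi_\text{topo}$, the adversary never sees the gate functions $g$, so the only thing tying a garbled input $X$ to a particular plaintext $y$ is the encoding information $e$ together with the decoding information $d$. I would argue that for a gate-hiding garbling scheme the pair $(e,d)$ can be ``re-interpreted'' consistently with any target output $y$, and that this re-interpretation is perfectly (or computationally) invisible to the adversary because it only permutes label-to-bit assignments on wires whose semantics are hidden. Concretely, write $e=\{(W_i^0,W_i^1)\}_{i\in[n]}$ as the list of label pairs on input wires and $X=(X_1,\dots,X_n)$ the given garbled input; for each wire $i$ let $c_i\in\{0,1\}$ be the bit with $X_i=W_i^{c_i}$. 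Given a desired plaintext $y=(y_1,\dots,y_n)$, define $e'$ by swapping the two labels of wire $i$ exactly when $y_i\ne c_i$, i.e. $(W_i'^0,W_i'^1):=(W_i^{y_i\oplus c_i\oplus 0}, W_i^{y_i\oplus c_i\oplus 1})$, so that $X_i=W_i'^{y_i}$ and hence $X=\encode(e',y)$. This handles the ``$X=\encode(e',y)$'' requirement by construction.

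The key steps, in order, are: (i) formalize the above relabelling map $e\mapsto e'$ and verify $\encode(e',y)=X$; (ii) observe that $\Phi(\tilde C)=\Phi_\text{topo}(\tilde C)=(n,m,l,A,B,\mathcal W)$ is left \emph{unchanged} by the relabelling — the topology, the wiring functions $A,B$, and the ciphertext-mapping function $\mathcal W$ do not depend on which of the two labels on a wire is called ``$0$'', since the gate tables themselves are not part of $\Phi_\text{topo}$; (iii) argue that $d$ can stay fixed: since the lemma gives the adversary $(\Phi(\tilde C),e,d)$ but \emph{not} $\tilde C$ itself (only $\Phi(\tilde C)$), there is no joint consistency constraint that the adversary can check between $d$ and the gate tables, so keeping $d$ unchanged costs nothing; and (iv) conclude that the two distributions $(\Phi(\tilde C),e,d)$ and $(\Phi(\tilde C),e',d)$ are identical as distributions — or differ only by the negligible probability $\epsilon(\lambda)$ coming from the underlying garbling scheme's privacy/obliviousness guarantee — so any PPT $\ad$ has advantage at most $\epsilon(\lambda)$.

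I would actually reduce step~(iv) to the \emph{obliviousness} property of the underlying gate-hiding garbling scheme \cite{bellare2012foundations,rosulek2021threehalf}: obliviousness says that $\tilde C$ together with $X$ (under $\Phi_\text{topo}$) reveals nothing about the plaintext input, which is precisely the statement that the label-to-bit assignment on input wires is computationally free to choose. Packaging this: let $\Sim_\text{obl}$ be the obliviousness simulator; then both $(\Phi(\tilde C),X)$ with encoding $e$ and with encoding $e'$ are each indistinguishable from $\Sim_\text{obl}(\Phi_\text{topo}(C),1^\lambda)$, and the triangle inequality gives the bound $\epsilon(\lambda)=2\,\epsilon_\text{obl}(\lambda)$. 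The one subtlety to flag as the main obstacle is the presence of $d$ in the adversary's view: if the scheme's decoding information $d$ committed to the gate semantics in a way that were checkable against $e$ alone, the relabelling could be detected. So the crux of the argument is to verify that in the gate-hiding construction used here $d$ consists only of output-wire label/bit mappings (or hashes thereof) that are independent of the input-wire relabelling, which is exactly where the gate-hiding (topology-only leakage) hypothesis is essential and where an analogous claim for a gate-\emph{revealing} scheme would fail.
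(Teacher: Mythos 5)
Your core construction is exactly the paper's: for each input wire, swap the two labels in $e$ precisely when the target bit $y_i$ differs from the bit $c_i=x_i$ that $X_i$ actually encodes, so that $X=\encode(e',y)$ by construction; the paper's proof is nothing more than this bitwise swap plus a bare assertion that $e'$ is indistinguishable from $e$. Where you go beyond the paper is in trying to justify that assertion, and there your reduction target is slightly off: obliviousness in the sense of \cite{bellare2012foundations} concerns the view $(\tilde C, X)$, whereas the adversary in this lemma receives the \emph{full} encoding information $e$ (both labels on every input wire) together with $d$ and the garbled tables $\mathcal{W}$, so it can evaluate the circuit on arbitrary label combinations and learn the truth table as a function of labels. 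The swap $e\mapsto e'$ then corresponds to replacing $C$ by the input-shifted circuit $C'(z)=C(z\oplus x\oplus y)$, and the property you actually need is $\Phi_\text{topo}$-privacy of the garbling across circuits with identical topology (i.e., that a garbling of $C$ with relabelled inputs is indistinguishable from a garbling of $C'$), not obliviousness of $(\tilde C,X)$. You do correctly identify the crux --- that this fails for gate-revealing schemes and that $d$ must not commit to input-wire semantics --- so the fix is to redirect step~(iv) from the obliviousness simulator to the gate-hiding privacy guarantee over same-topology circuits; with that substitution your argument is a strictly more complete version of the paper's.
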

\begin{proof}
Here we describe only the brief process of proving this using a definite algorithm. 
For the $i$-th single-bit input, denoted by $x(i)$ and $y(i)$ separately, if
\begin{itemize}
    \item $x(i)=y(i)$: remain unchanged for encoding information $e$ for this input wire.
    \item $x(i)\ne y(i)$: swap the garbled values of 0 and 1 in encoding information $e$ for this input wire.
\end{itemize}
It is obvious, then, that the modified encoding information is $e'$ such that $X=\encode(e', y)$. In this instance, the encoding information $e'$ is indistinguishable from the original information.
\end{proof}


\begin{theorem}[Security with Oracle to Trigger State]
\label{thm:security_oracle_tau}
Assuming a gate-hiding garbled circuit $(\Phi(\tilde{C}), e, d)\gets \garble(1^\lambda, C)$ and a random oracle $H$, for any PPT adversary $\ad$ and input $x\in\{0,1\}^n$ and $X\gets\encode(e,x)$, it holds that there exists a negligible $\epsilon(\lambda)$ s.t.
\begin{align*}
\Big\lvert&\Pr[s\gets\decode(d, \eval(\Phi(\tilde{C}),X)):\ad^H(\Phi(\tilde{C}), X, s, d)=x] \\
-&\Pr[s\gets\decode(d, \eval^{\oracle_\tau}(\Phi(\tilde{C}),X)):\ad^{\oracle_\tau}(\Phi(\tilde{C}), X, s, d)=x]\Big\rvert\le \epsilon(\lambda)
\end{align*}
\end{theorem}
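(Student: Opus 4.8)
The plan is to establish Theorem~\ref{thm:security_oracle_tau} by chaining together the three results already in hand: the simulation-based privacy bounds of Theorems~\ref{thm:untriggered_privacy} and~\ref{thm:triggered_privacy} (combined in Theorem~\ref{thm:privacy_with_trigger_state}), the computational indistinguishability of Lemma~\ref{lem:gc_oracle_equiv}, and the symmetry lemma (Lemma~\ref{lem:symmetry_gc}). Intuitively, the only extra power the adversary gains on the right-hand side is access to $\oracle_\tau$, i.e. to the trigger state $\tau_\text{st}$; everything else ($\Phi(\tilde C)$, $X$, $s$, $d$) is identical in the two experiments. So the whole task reduces to showing that the trigger state, together with the skip-gate contexts an evaluator could plausibly reconstruct, leaks no information about $x$ beyond what $\decode(d,\eval(\tilde C,X))$ already reveals.

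First I would fix the two experiments side by side and isolate the difference: replace $\eval^{\oracle_\tau}(\Phi(\tilde C),X)$ by $\eval(\Phi(\tilde C),X)$ plus a separate $\oracle_\tau$ handle, which is harmless since the skipping scheme leaves the garbled circuit intact (the correctness argument in Section~4.1). Next, invoke Lemma~\ref{lem:gc_oracle_equiv} to swap the oracle-augmented evaluation for the full view of the skipping scheme $\view{\game(\Phi(\tilde C),X)} = (X,\tilde C(X),\mathcal{S}^{\text{SS}},\tau_\text{st},\tau_\text{idx})$; this costs a negligible term. Then apply Theorem~\ref{thm:privacy_with_trigger_state}: there is a PPT simulator $\Sim$ producing $\Sim(\view{\eval(\Phi(\tilde C),X)})$ that is $(4n_c-1)2^{-\lambda}$-close to the real skipping-scheme view. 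After this substitution the adversary's entire auxiliary input is a function of $\view{\eval(\Phi(\tilde C),X)}=(X,\tilde C(X),\Phi(\tilde C),I(\tilde C,X))$ only — in particular it depends on $x$ \emph{solely through $\tilde C(X)$ and the intermediate labels}, exactly as in the baseline garbling scheme.

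The final and, I expect, the subtle step is to argue that this simulated view still does not let $\ad$ output $x$ with non-negligibly better probability than in the baseline experiment. Here is where gate-hiding is essential and Lemma~\ref{lem:symmetry_gc} does the work: for the \emph{fixed} garbled input $X$, any alternative plaintext $y\in\{0,1\}^n$ admits encoding information $e'$ with $X=\encode(e',y)$ that is indistinguishable from $e$ given $\Phi_\text{topo}(\tilde C)$. Consequently the adversary cannot, from $(\Phi(\tilde C),X,s,d)$ and a simulated skipping-scheme view, tell whether the underlying input was $x$ or any other $y$ consistent with the decoded output $s$; formally one bounds $\Pr[\ad=x]$ by the guessing probability over the fiber $\decode^{-1}(d,\cdot)\cap(\text{plausible }y)$, which is identical on both sides of the inequality. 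The main obstacle will be making the reduction to Lemma~\ref{lem:symmetry_gc} precise \emph{in the presence of $d$}: the decoding information $d$ is handed to the adversary, so one must check that the symmetry swap of $e$ for $e'$ is consistent with a fixed $d$ (it is, since only input-wire label orderings are permuted, never output-wire decoding), and that the random oracle $H$ can be simulated consistently across the hybrid so that $\ad^H$ and $\ad^{\oracle_\tau}$ see coherent transcripts. Collecting the three negligible terms — from Lemma~\ref{lem:gc_oracle_equiv}, from Theorem~\ref{thm:privacy_with_trigger_state}'s $(4n_c-1)2^{-\lambda}$, and from Lemma~\ref{lem:symmetry_gc}'s $\epsilon(\lambda)$ — gives the claimed negligible bound, completing the proof.
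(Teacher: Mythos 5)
Your proposal is correct and takes essentially the same approach as the paper: the paper's proof of Theorem~\ref{thm:security_oracle_tau} is a single short paragraph that pivots entirely on Lemma~\ref{lem:symmetry_gc}, arguing that symmetry lets \emph{any} plaintext $y$ be made consistent with the fixed garbled input $X$ via an indistinguishable $e'$, so the trigger state erases nothing and $\ad^{\oracle_\tau}$ gains no advantage over $\ad^H$. Your version reaches the same crux but additionally spells out the intermediate hybrids (via Lemma~\ref{lem:gc_oracle_equiv} and Theorem~\ref{thm:privacy_with_trigger_state}) and the consistency check that the swap of $e$ for $e'$ does not disturb the fixed decoding information $d$ --- details the paper leaves implicit --- making yours a more complete rendering of the same argument.
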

\begin{proof}
Lemma \ref{lem:symmetry_gc} states that adversary $\ad^{\oracle_\tau}$ has no advantage over adversary $\ad^H$ in guessing the real inputs since the property of symmetry enables \emph{any} input to hold the same view by hiding the encoding scheme. By doing so, the characteristics of inputs are thoroughly erased, making the trigger state useless. This results in an equal level of security between $\eval^{\oracle_\tau}$ and $\eval$.
\end{proof}

\begin{corollary}[Security of Skipping Scheme]
\label{cor:security_ss}
Except for a negligible probability, the garbling scheme $\garblescheme^{\skippingscheme}$ with the skipping scheme is as secure as the normal garbling scheme $\garblescheme$.
\end{corollary}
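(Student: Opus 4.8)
The corollary should follow by composing the reductions established in this section, so the plan is largely one of bookkeeping rather than of new ideas. First I would recall from Fig.~\ref{fig:patched_garbling_scheme} that $\garblescheme^{\skippingscheme}$ runs $\genss$ and $\evalss$ \emph{beside} an untouched copy of $\garblescheme$, so the only information a semi-honest evaluator gains over the baseline is the view of the skipping scheme, $(\mathcal{S}^{\text{SS}},\tau_{\text{st}},\tau_{\text{idx}})$. Theorem~\ref{thm:privacy_with_trigger_state} --- which itself fuses the untriggered and triggered analyses of Theorems~\ref{thm:untriggered_privacy} and~\ref{thm:triggered_privacy} --- already gives a PPT simulator that reproduces this view from $\view{\eval(\Phi(\tilde{C}),X)}$ together with the trigger-state oracle $\oracle_\tau$, with distinguishing advantage at most $(4n_c-1)2^{-\lambda}$. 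Because $n_c$ is at most the polynomial (sub)circuit size, this term is negligible in $\lambda$; hence, up to a negligible gap, an evaluator of $\garblescheme^{\skippingscheme}$ learns nothing that cannot be derived from the ordinary evaluation plus the trigger state --- which is precisely the statement of Lemma~\ref{lem:gc_oracle_equiv}, letting me substitute $\eval^{\oracle_\tau}$ for the whole skipping-scheme machinery.

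It then remains to discharge the oracle $\oracle_\tau$, that is, to show that $\eval^{\oracle_\tau}$ leaks no more about the private input than $\eval$. This is exactly Theorem~\ref{thm:security_oracle_tau}, and its engine is the symmetry Lemma~\ref{lem:symmetry_gc}: since $\Phi=\topo$, for any candidate input $y$ one exhibits encoding information $e'$ --- obtained from $e$ by swapping the two labels on each wire where $x$ and $y$ disagree --- with $X=\encode(e',y)$ and with $e'$ indistinguishable from $e$. Thus the fixed garbled input $X$ is consistent with \emph{every} plaintext input, so neither $\tau_{\text{st}}$ nor $\tau_{\text{idx}}$ (both functions of $X$ and the circuit) can be correlated by a PPT adversary with the true input: an $\oracle_\tau$-adversary recovering $x$ with noticeable probability would yield an ordinary evaluator doing the same, contradicting the privacy of $\garblescheme$. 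Chaining the three reductions by a standard hybrid argument, the advantage of distinguishing $\garblescheme^{\skippingscheme}$ from $\garblescheme$ is bounded by $n_c2^{-(\lambda-1)}+(2n_c-1)2^{-\lambda}+\epsilon(\lambda)$, a negligible quantity; this is the ``negligible probability'' asserted in the corollary, and it also absorbs the simulation slack already noted in the correctness discussion (uniformly sampled simulated ciphertexts collide with $\tilde{C}(X)$ with probability at most $n_c2^{-(\lambda-1)}$).

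The single delicate point --- and the reason the corollary is phrased for gate-hiding circuits only --- is that the whole chain collapses if $\Phi_\text{circ}$ is leaked instead of $\topo$. With the gate functions in hand the evaluator could solve the associated CIRCUIT-SAT (or BOOLEAN-SAT) instance, determine which input assignments trip a skip gate, and thereby read off $\tilde{C}(X)$ and partial information about $x$; then Lemma~\ref{lem:symmetry_gc} fails and $\oracle_\tau$ becomes a genuine side channel. I would therefore open the proof by fixing $\Phi=\topo$ throughout, matching the hypotheses of Lemma~\ref{lem:symmetry_gc} and Theorem~\ref{thm:security_oracle_tau}. Apart from flagging this restriction, the expected obstacle is minor: verifying that the three advantage terms compose additively through the hybrids and that no further loss is introduced when the skipping scheme is interleaved with, rather than appended after, the baseline evaluation.
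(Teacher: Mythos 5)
Your proposal is correct and follows essentially the same route the paper intends: compose Theorem~\ref{thm:privacy_with_trigger_state} (whose bound $(4n_c-1)2^{-\lambda}$ is exactly your sum $n_c2^{-(\lambda-1)}+(2n_c-1)2^{-\lambda}$) with Lemma~\ref{lem:gc_oracle_equiv} and Theorem~\ref{thm:security_oracle_tau} via the symmetry Lemma~\ref{lem:symmetry_gc}, restricting to $\Phi=\topo$. The paper leaves this corollary unproved as an immediate consequence of that chain, so your write-up is simply a more explicit version of the intended argument.
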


%
%
\section{Conclusion}
This paper present the skipping scheme for gate-hiding circuits that allows a fast evaluation of garbling schemes on skippable subcircuits. As the number of skip gates increases, the estimated efficiency increases quasi-exponentially, substantially reducing computational overheads. In the presence of oblivious gate functions, our theoretical security analysis proves the hybrid security of our skipping scheme.



%
%
%
\bibliographystyle{splncs04}
\bibliography{mybibliography}

\appendix
\section{Supplement to Skipping Scheme}
\subsection{Skippable Wires Searching}
\label{appendix:skippable_wires_searching}

This section explains how to select skippable wires in a (sub)circuit $C$. Let $F_C$ be the Boolean function of the circuit, $\mathcal{I}$ be the set of inputs, and $\mathrm{PIS}(\mathcal{I}, F_C)$ be the procedure for searching for prime implicants. In general, when $|\mathcal{I}|$ is small, $\mathrm{PIS}$ can be the Quine-McCluskey algorithm \cite{quine1952problem,mccluskey1956minimization}, while when $|\mathcal{I}|$ is large, $\mathrm{PIS}$ can be the ESPRESSO algorithm \cite{brayton1984logic}.
The overall algorithm for selecting skippable wires is described below.

\begin{figure}[htb]
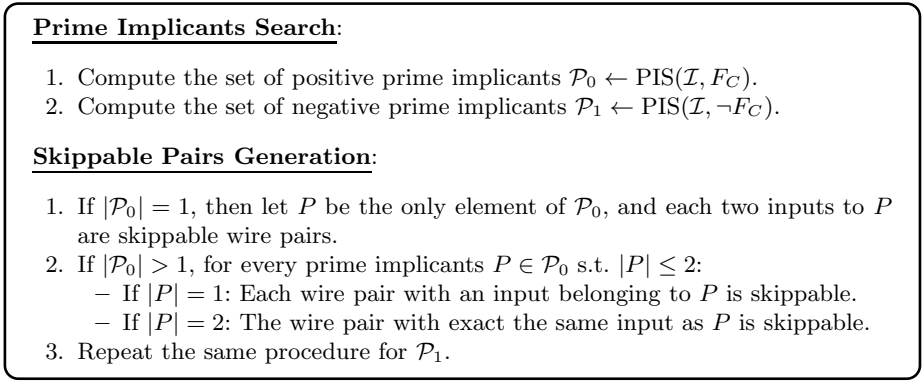

\centering
\begin{normalalg}
\underline{\textbf{Prime Implicants Search}}:
\begin{enumerate}
    \item Compute the set of positive prime implicants $\mathcal{P}_0\gets \mathrm{PIS}(\mathcal{I}, F_C)$.
    \item Compute the set of negative prime implicants $\mathcal{P}_1\gets \mathrm{PIS}(\mathcal{I}, \neg F_C)$.
\end{enumerate}

\underline{\textbf{Skippable Pairs Generation}}:
\begin{enumerate}
    \item If $|\mathcal{P}_0| = 1$, then let $P$ be the only element of $\mathcal{P}_0$, and each two inputs to $P$ are skippable wire pairs.
    \item If $|\mathcal{P}_0| > 1$, for every prime implicants $P\in \mathcal{P}_0$ s.t. $|P|\le 2$:
    \begin{itemize}
        \item If $|P|=1$: Each wire pair with an input belonging to $P$ is skippable.
        \item If $|P|=2$: The wire pair with exact the same input as $P$ is skippable.
    \end{itemize}
    \item Repeat the same procedure for $\mathcal{P}_1$.
\end{enumerate}
\end{normalalg}
\caption{Algorithm for selecting skippable wires.}
\label{fig:skippable_wires_searching}
\end{figure}

\paragraph{Correctness.}
Since each $P$ is an implicant of $F_C$, the corresponding inputs to $P$ would satisfy the skippable wire definition. Thus, each $P$ in the above procedure can be mapped to some candidate skippable wire pairs.

\begin{remark}
As $|P|=2$ indicates, only one pair of inputs can be skipped. It is possible to continue on these wire pairs, but it would reduce the likelihood that skip gates would be triggered. Practically, it would be better to prioritize the wire pairs with $|P|=1$ higher on the priority list.
\end{remark}

\subsection{Costs of Skipping Scheme}
\label{appendix:costs}

Skipping scheme costs can be divided into two categories: intra-subcircuit costs and inter-subcircuit costs. The cost of evaluating a subcircuit falls under the intra-subcircuit cost category, while the inter-subcircuit cost category relates to the improvement of the interplay between subcircuits.

\paragraph{Intra-subcircuit costs.} Our formula for triggering when the chance of triggering is $\theta$ is given here in its complete mathematical derivation. Let $\mathrm{ACC}_{N,n_c}(\theta)$ be the average costs of $N$-gates garbled (sub)circuits with $n_c$ CSS gates.

\begin{proposition}[Intra-subcircuit costs]
\label{prop:intra_costs}
\begin{align*}
\mathrm{ACC}_{N,n_c}(\theta)=&\sum_{k=1}^{n_c}\Pr[k\text{-th gate hit only}]\cdot(k\cdot 5t_h + t_d) + \Pr[\text{All miss}]\cdot N t_d \\
=& t_d + \frac{5t_h}{\theta} + (1-\theta)^{n_c}\cdot N t_d - (1-\theta)^{n_c}\cdot[t_d + 5t_h(n_c+\frac{1}{\theta})]
\end{align*}
\end{proposition}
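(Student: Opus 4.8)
The plan is to read the two ingredients of the displayed sum off results already in hand and then evaluate the resulting finite sums in closed form.

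First I would argue the first equality. In a serial honest run exactly one of the following mutually exclusive, exhaustive events occurs: for some $k\in\{1,\dots,n_c\}$ the $k$-th skip gate is the first (hence the only) one to trigger, or no skip gate triggers at all. Thus $\mathrm{ACC}_{N,n_c}(\theta)$ is the weighted sum of the conditional costs of these events, and Proposition~\ref{prop:trigger_prob} supplies the weights: $\Pr[k\text{-th gate hit only}]=\theta(1-\theta)^{k-1}$ and $\Pr[\text{all miss}]=(1-\theta)^{n_c}$. For the conditional costs I would invoke the per-gate accounting of Section~\ref{sec:efficiency} together with $\evalcss$ in Fig.~\ref{fig:css}: when the $k$-th gate triggers, the evaluator processes gates $1,\dots,k$, which the accounting charges $5t_h$ apiece (one trigger hash plus the context update assembled from Eq.~\ref{eqa:css_context} via the chain values of Eq.~\ref{eqa:css_chain}), and then performs a single decryption $\dec^\lambda$ of cost $t_d$; when nothing triggers the fallback is evaluating the full $N$-gate circuit at cost $Nt_d$. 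This yields
\begin{equation*}
\mathrm{ACC}_{N,n_c}(\theta)=\sum_{k=1}^{n_c}\theta(1-\theta)^{k-1}(5k\,t_h+t_d)+(1-\theta)^{n_c}N t_d .
\end{equation*}

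Next I would evaluate that sum. Writing $q:=1-\theta$, split it into the $t_d$-part $t_d\sum_{k=1}^{n_c}\theta q^{k-1}$, a plain geometric sum equal to $t_d(1-q^{n_c})$, and the $t_h$-part $5t_h\theta\sum_{k=1}^{n_c}k q^{k-1}$, an arithmetico-geometric sum. For the latter I would use the standard identity $\sum_{k=1}^{n}k q^{k-1}=\dfrac{1-(n+1)q^{n}+n q^{n+1}}{(1-q)^{2}}$ and simplify its numerator to $1-q^{n}(n\theta+1)$ using $1-q=\theta$, obtaining $\theta\sum_{k=1}^{n_c}k q^{k-1}=\tfrac1\theta-q^{n_c}(n_c+\tfrac1\theta)$. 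Adding the two parts back to the unchanged term $(1-\theta)^{n_c}Nt_d$ and collecting gives precisely $t_d+\frac{5t_h}{\theta}+(1-\theta)^{n_c}Nt_d-(1-\theta)^{n_c}[t_d+5t_h(n_c+\tfrac1\theta)]$, the claimed formula.

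The summation itself is routine; the part that will need care is the cost bookkeeping behind the first equality — in particular charging every one of the $k$ processed gates the uniform $5t_h$ (a mild overestimate of the triggered gate, which really does only its trigger hash plus a decryption) and exactly one $t_d$ per triggered run, so I would make sure that reading is consistent with $\evalcss$ and with the per-gate row of Table~\ref{tab:time_costs}. As a sanity check, letting $n_c\to\infty$ collapses the formula to $t_d+5t_h/\theta$ and, at $\theta=\tfrac12$, reproduces the $Nt_d\cdot2^{-n_c}$-vanishing average advertised there.
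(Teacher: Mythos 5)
Your proposal is correct and follows exactly the route the paper intends: the first line is the case decomposition over which skip gate (if any) triggers first, with the geometric weights $\theta(1-\theta)^{k-1}$ and $(1-\theta)^{n_c}$ of Proposition~\ref{prop:trigger_prob} and the per-gate charges of Section~\ref{sec:efficiency}, and the second line is the routine evaluation of the geometric and arithmetico-geometric sums, which you carry out correctly (and which checks against the $\theta=\tfrac12$ row of Table~\ref{tab:time_costs}). The paper supplies no further argument beyond the displayed identity, so your write-up simply makes the same computation explicit.
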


\paragraph{Inter-subcircuit costs.} Inter-subcircuit costs assume that the entire circuit is formulated as the combinations of indivisible subcircuits. 
A subcircuit's cost can be calculated based on the intra-subcircuit costs mentioned above. We are only interested in the general improvement over the circuit's topology as a result of skipping.

Let $C_i$ denotes the $i$-th subcircuits of the total $n$ subcircutis, and $P(C_i)$ be the set of subcircuits that have outputs to the $C_i$. Let $\mathrm{ACC}(C_i)$ be the intra-subcircuit cost of $C_i$, and $\mathrm{EAC}(C_i)$ be the expected accumulative time cost of $C_i$. Consider all subcircuits as having a probability of $\theta$ of being skipped.
\begin{lemma}[Decay of accumulative costs]
\begin{align*}
\mathrm{EAC}(C_i)=\max_{\hat{C}\in P(C_i)}\{(1-\bigtheta(\theta))\cdot \mathrm{EAC}(\hat{C})\} + \mathrm{ACC}(C_i)
\end{align*}
\end{lemma}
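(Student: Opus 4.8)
The plan is to prove the recursion by induction on the depth of $C_i$ in the DAG of subcircuits, unwinding the definition of $\mathrm{EAC}$ as the expected wall-clock time from the start of the global evaluation until the output labels of $C_i$ become available. For the base case, a source subcircuit $C_i$ with $P(C_i)=\emptyset$ has no incoming dependencies, so its outputs are ready exactly $\mathrm{ACC}(C_i)$ after evaluation begins by Proposition \ref{prop:intra_costs}; this matches the claimed formula once the maximum over the empty set is read as zero.

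For the inductive step, I would first observe that $C_i$ cannot begin until \emph{every} $\hat C\in P(C_i)$ has delivered its output labels, so the time at which $C_i$ starts is $\max_{\hat C\in P(C_i)}(\text{finishing time of }\hat C)$, after which $C_i$ itself runs for an additional $\mathrm{ACC}(C_i)$ (again by Proposition \ref{prop:intra_costs}, now applied to the gates internal to $C_i$). Taking expectations, the $C_i$ contribution is additive and produces the $+\,\mathrm{ACC}(C_i)$ term, and it remains to evaluate the expected start time, which by the inductive hypothesis should reduce to $\max_{\hat C\in P(C_i)}\{(1-\bigtheta(\theta))\,\mathrm{EAC}(\hat C)\}$.

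The crux is the decay factor. Along any path of subcircuits feeding $\hat C$, the skip gates fire according to the geometric law of Proposition \ref{prop:trigger_prob}: the $k$-th skip gate on the branch is the triggering one with probability $\theta(1-\theta)^{k-1}$, so conditioned on the scheme being triggered the evaluator halts after an expected $\bigtheta(1/\theta)$ gates rather than traversing the whole branch. Whenever a trigger occurs upstream of $\hat C$, the remainder of that branch's computation is bypassed and $\hat C$'s labels arrive correspondingly earlier; averaging over the trigger event (probability $\bigtheta(\theta)$ per skippable stage) shows the accumulated cost of the branch terminating at $\hat C$ shrinks by a multiplicative factor $1-\bigtheta(\theta)$ relative to the no-skip accumulated cost $\mathrm{EAC}(\hat C)$. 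Substituting this shrunken branch cost into the expected start time and retaining the slowest-in-expectation predecessor branch yields the stated maximum, closing the induction.

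The main obstacle is that $\mathbb{E}[\max_{\hat C}(\cdot)]\neq\max_{\hat C}\mathbb{E}[(\cdot)]$ in general, so the displayed equality is really an estimate valid up to lower-order terms. I would justify the interchange by noting that the skipping events on distinct predecessor branches are essentially independent and that each branch cost concentrates around its mean — the per-gate costs are bounded and the number of gates is large — so the fluctuation of the maximum is of lower order and can be absorbed into the $\bigtheta(\theta)$ slack, exactly as the analysis behind Proposition \ref{prop:intra_costs} already absorbs the $\tilde{\bigtheta}(2^{-n_c})$ correction. Pinning down the precise constant hidden in $\bigtheta(\theta)$, which depends on how many skippable stages lie on a branch and on the assumed uniform-ish $\theta$, is the delicate part, and I would state it only up to this asymptotic order rather than as a closed form.
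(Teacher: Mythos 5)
Your argument matches the paper's: its proof is a three-sentence informal justification that $C_i$ must wait for its slowest predecessor, whose expected accumulated cost is attenuated by a factor $1-\bigtheta(\theta)$ because skipping eliminates that portion of the work in expectation, and then adds its own intra-subcircuit cost $\mathrm{ACC}(C_i)$ --- exactly your inductive step. You are in fact more careful than the paper, which silently interchanges $\mathbb{E}[\max]$ with $\max\mathbb{E}$; your remark that the discrepancy must be absorbed into the $\bigtheta(\theta)$ slack addresses a gap the paper does not even acknowledge.
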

\begin{proof}
The expected accumulative time cost of $C_i$ is determined by the most time-comsuming part of the previous subcircuits. Almost $\bigtheta(\theta)$ portion of the time costs are eliminated as expectation. Considering the decay of time costs from previous subcircuits and the intra-subcircuit costs, the lemma can be formulated as follows.
\end{proof}

The enhancement of inter-subcircuit costs can be measured using $\mathrm{EAC}(C_\text{final})$ in comparison to $\sum_i \mathrm{ACC}(C_i)$, where $C_\text{final}$ is the last subcircuit of the circuit.

\begin{theorem}[Upper bound of $\mathrm{EAC}(C_\text{final})$]
\label{thm:upper_bound_inter_costs}
Let $D(C_i)$ denotes the reverse depth of subcircuit $C_i$ counting from the bottom of the topology of the circuit to the top. For example, $D(C_\text{final})=0$ and $D(C_0)=D_\text{max}$. The upper bound of $\mathrm{EAC}(C_\text{final})$ is described as follows:
\begin{align*}
\mathrm{EAC}(C_\text{final})\le \sum_i B_i(k) \cdot \mathrm{ACC}(C_i)
\end{align*}
where $B_i(k)=\bigtheta((1-\theta)^k)$ is the accumulative decay factor for $C_i$.
\end{theorem}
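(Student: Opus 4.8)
The plan is to unroll the recurrence of the \emph{Decay of accumulative costs} lemma along the topology of the circuit, grouped by reverse depth, while repeatedly replacing each $\max$ by the corresponding sum over predecessors.

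First I would note that all the quantities $\mathrm{EAC}(\cdot)$ and $\mathrm{ACC}(\cdot)$ are non-negative, so the lemma immediately gives the weaker inequality
\begin{align*}
\mathrm{EAC}(C_i)\le \sum_{\hat C\in P(C_i)}(1-\bigtheta(\theta))\cdot\mathrm{EAC}(\hat C) + \mathrm{ACC}(C_i).
\end{align*}
Starting from $C_\text{final}$ (which has $D(C_\text{final})=0$) and applying this bound recursively until every branch reaches a source subcircuit $\hat C$ with $P(\hat C)=\emptyset$ and $\mathrm{EAC}(\hat C)=\mathrm{ACC}(\hat C)$, I would prove by induction on $D_\text{max}-D(C_i)$ the statement: $\mathrm{EAC}(C_\text{final})\le\sum_i \beta_i\cdot\mathrm{ACC}(C_i)$, where $\beta_i$ is the total of $(1-\bigtheta(\theta))^{|\pi|}$ over all directed paths $\pi$ from $C_i$ down to $C_\text{final}$. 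Since $D(C_i)$ is defined as the length of the longest such path, every $\pi$ satisfies $|\pi|\le D(C_i)$, and since $0<1-\bigtheta(\theta)<1$ each term in $\beta_i$ is bounded; the remaining work is to control $\beta_i$ globally.

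The main obstacle is precisely this count: the number of directed paths from $C_i$ to $C_\text{final}$ can grow with the width of the circuit, so a term-by-term bound on $\beta_i$ does not by itself have the desired form $\bigtheta((1-\theta)^{D(C_i)})$. I would handle it by expanding the recurrence one layer of reverse depth at a time: at each layer the aggregate decay coefficient carried so far is multiplied by a single factor $1-\bigtheta(\theta)$ regardless of how many predecessor edges are crossed at that layer, so after $k=D(C_i)$ layers the coefficient attached to $\mathrm{ACC}(C_i)$ is of order $(1-\bigtheta(\theta))^{k}=\bigtheta((1-\theta)^{k})$, with the per-layer branching absorbed into the constant hidden by $\bigtheta$ exactly as the statement of the preceding lemma does for a single step. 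Setting $B_i(k):=\bigtheta((1-\theta)^{k})$ with $k=D(C_i)$ then yields $\mathrm{EAC}(C_\text{final})\le\sum_i B_i(k)\cdot\mathrm{ACC}(C_i)$, as claimed. Finally I would observe that the bound is tight up to the hidden constant: along a single longest path the $\max$ in the lemma is attained at each step, so no decay beyond $(1-\bigtheta(\theta))^{D(C_i)}$ is lost there.
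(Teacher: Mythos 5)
There is a genuine gap, and it is exactly at the step you flag as ``the main obstacle.'' By relaxing the recurrence of the decay lemma from a $\max$ over $P(C_i)$ to a sum over $P(C_i)$, you turn the coefficient of $\mathrm{ACC}(C_i)$ into $\beta_i=\sum_{\pi}(1-\bigtheta(\theta))^{|\pi|}$, a sum over \emph{all} directed paths from $C_i$ to $C_\text{final}$. The number of such paths is not a constant: it can grow like the product of the in-degrees along the way, i.e.\ exponentially in the reverse depth (and in any case with the size of the circuit). An asymptotic constant hidden by $\bigtheta((1-\theta)^k)$ must be independent of $k$ and of the circuit, so a per-layer branching factor that compounds multiplicatively across $k$ layers cannot be ``absorbed'' into it; $(W(1-\theta))^{k}$ is not $\bigtheta((1-\theta)^{k})$ even for a fixed width $W>1$. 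Your layer-by-layer reformulation does not repair this: with the sum-relaxation, the aggregate coefficient passed from one layer to the next is multiplied by $(1-\bigtheta(\theta))$ \emph{times} the branching at that layer, not by $(1-\bigtheta(\theta))$ alone. So as written the argument does not establish the stated form of the bound.

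The missing idea is that you should not give up the $\max$ at all --- it is the whole point of the decay lemma. Unrolling $\mathrm{EAC}(C_i)=\max_{\hat C\in P(C_i)}\{(1-\bigtheta(\theta))\,\mathrm{EAC}(\hat C)\}+\mathrm{ACC}(C_i)$ from $C_\text{final}$ selects, at each step, a \emph{single} maximizing predecessor, so the expansion terminates in a single chain $C_\text{final}=C_{j_0},C_{j_1},\dots,C_{j_L}$ and yields $\mathrm{EAC}(C_\text{final})=\sum_{t=0}^{L}(1-\bigtheta(\theta))^{t}\,\mathrm{ACC}(C_{j_t})$: each subcircuit contributes at most once, with exponent equal to its position along that chain, which is controlled by its reverse depth. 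Padding this sum with the remaining non-negative terms $B_i(k)\cdot\mathrm{ACC}(C_i)$ for the off-chain subcircuits gives the claimed upper bound directly, with no path counting. (Your closing tightness remark also needs care: the maximizing chain need not coincide with a longest path, so the $\max$ is not necessarily ``attained at each step'' along one.)
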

\begin{proof}
Iteratively expand the $\mathrm{EAC}(C_\text{final})$ from the last subcircuit to the first. The reverse depth determines the decay factor for the intra-subcircuit cost. And the $\mathrm{EAC}(C_\text{final})$ is bounded by the weighted sum over the intra costs.
\end{proof}
\begin{corollary}
In general, deeper circuits benefit more from skipping schemes.
\end{corollary}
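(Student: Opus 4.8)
The plan is to read the statement off the depth-weighted bound of Theorem~\ref{thm:upper_bound_inter_costs}. Following the excerpt, the natural way to quantify ``benefit'' is the speedup ratio $\rho := \mathrm{EAC}(C_\text{final}) / \sum_i \mathrm{ACC}(C_i)$, which compares the expected accumulative cost that exploits the inter-subcircuit decay against the cost of evaluating every subcircuit in full; the claim becomes that $\rho$ is smaller for topologies of larger depth $D_\text{max}$. Since Theorem~\ref{thm:upper_bound_inter_costs} gives $\mathrm{EAC}(C_\text{final})\le \sum_i B_i(D(C_i))\,\mathrm{ACC}(C_i)$ with $B_i(k)=\bigtheta((1-\theta)^k)$, it suffices to show that this weighted sum is a vanishing fraction of $\sum_i \mathrm{ACC}(C_i)$ as $D_\text{max}\to\infty$.

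First I would regroup both sides by reverse depth. Put $L_k := \sum_{i:\,D(C_i)=k}\mathrm{ACC}(C_i)$ for $k=0,\dots,D_\text{max}$, so that $\sum_i \mathrm{ACC}(C_i)=\sum_{k=0}^{D_\text{max}}L_k$ while $\mathrm{EAC}(C_\text{final})\le\sum_{k=0}^{D_\text{max}}\bigtheta((1-\theta)^k)\,L_k$. Under a mild regularity assumption on the layer profile --- that the per-layer cost does not grow as one moves toward the inputs, e.g.\ $L_k=\bigtheta(L)$ uniformly in $k$ --- the right-hand side is dominated by a geometric series, $\sum_{k\ge 0}\bigtheta((1-\theta)^k)\,L=\bigtheta(L/\theta)$, which is independent of $D_\text{max}$, whereas $\sum_i \mathrm{ACC}(C_i)=\bigtheta(D_\text{max}\,L)$ grows linearly. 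Hence $\rho = O(1/(\theta D_\text{max}))$, strictly decreasing in $D_\text{max}$, which is exactly the claim of the corollary.

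To state it without committing to a width profile, I would also record the monotone form: appending one layer on the input side of the topology increases $D_\text{max}$ by one but leaves every existing reverse depth unchanged, so the new layer adds to $\mathrm{EAC}(C_\text{final})$ only a $\bigtheta((1-\theta)^{D_\text{max}})$ fraction of its raw intra-cost while it adds its full cost to $\sum_i \mathrm{ACC}(C_i)$; since the numerator's increment decays geometrically in the current depth, $\rho$ strictly drops with each added layer. Iterating this is an equivalent derivation and parallels the quasi-exponential decay already established intra-subcircuit in Proposition~\ref{prop:intra_costs}.

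The hard part will be pinning down the structural hypothesis: with no constraint on how cost is spread over levels one can engineer a pathological topology whose entire mass sits at reverse depth $0$, making depth irrelevant, so the real content is to isolate the condition (balanced, or at least non-increasing-toward-the-inputs, layer costs) under which the $L_k$ do not collapse and to argue that typical layered circuits meet it. A secondary subtlety is the small-$\theta$ regime, where $1/\theta$ is large and the marginal gain per additional layer is slight; I would therefore phrase the corollary as an asymptotic-in-$D_\text{max}$ statement for each fixed $\theta>0$, consistent with the rest of Section~\ref{sec:efficiency}.
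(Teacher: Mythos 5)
Your proposal is correct and follows essentially the route the paper intends: the corollary is stated without proof as an immediate reading of Theorem~\ref{thm:upper_bound_inter_costs}, namely that the decay factors $B_i(k)=\bigtheta((1-\theta)^k)$ shrink with reverse depth, so the bounded $\mathrm{EAC}(C_\text{final})$ becomes an ever smaller fraction of $\sum_i \mathrm{ACC}(C_i)$ as $D_\text{max}$ grows. Your added regularity hypothesis on the layer costs $L_k$ (and the observation that without it the claim can fail when all cost mass sits at reverse depth $0$) is a genuine and correct sharpening of the paper's informal ``in general'' qualifier, not a departure from its argument.
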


\end{document}